\newtheorem{definition}{Definition} 
\newtheorem{lemma}[definition]{Lemma}
\newtheorem{thm}[definition]{Theorem}
\newtheorem*{rep@theorem}{\rep@title}
\newcommand{\newreptheorem}[2]{%
\newenvironment{rep#1}[1]{%
 \def\rep@title{#2 \ref{##1} (restatement)}%
 \begin{rep@theorem}}%
 {\end{rep@theorem}}}
\def\ba#1\ea{\begin{align}#1\end{align}}
\def\ban#1\ean{\begin{align*}#1\end{align*}}
\newcommand{\be}{\begin{equation}}
\newcommand{\ee}{\end{equation}}
\def\Eveoutput{\mathbb{Z}}
\def\benum{\begin{enumerate}}
\def\eenum{\end{enumerate}}
\def\squareforqed{\hbox{\rlap{$\sqcap$}$\sqcup$}}
\def\qed{\ifmmode\squareforqed\else{\unskip\nobreak\hfil
\penalty50\hskip1em\null\nobreak\hfil\squareforqed
\parfillskip=0pt\finalhyphendemerits=0\endgraf}\fi}
\def\endenv{\ifmmode\;\else{\unskip\nobreak\hfil
\penalty50\hskip1em\null\nobreak\hfil\;
\parfillskip=0pt\finalhyphendemerits=0\endgraf}\fi}
\def\Eveoutput{z}
\def\dcintro{d_c}
\newcommand{\defeq}{\vcentcolon=}
\newcommand{\<}{\langle}
\renewcommand{\>}{\rangle}
\def\be{\begin{equation}}
\def\ee{\end{equation}}
\def\ben{\begin{eqnarray}}
\def\een{\end{eqnarray}}
\def\bei{\begin{itemize}}
\def\eei{\end{itemize}}
\mathchardef\ordinarycolon\mathcode`\:
\def\vcentcolon{\mathrel{\mathop\ordinarycolon}}
\newcommand{\nc}{\newcommand}
 \nc{\proj}[1]{|#1\rangle\!\langle #1 |} 
\nc{\avg}[1]{\langle#1\rangle}
\nc{\conv}{\operatorname{conv}}
\nc{\smfrac}[2]{\mbox{$\frac{#1}{#2}$}} \nc{\Tr}{\operatorname{Tr}}
\nc{\ox}{\otimes} \nc{\dg}{\dagger} \nc{\dn}{\downarrow}
\nc{\lmax}{\lambda_{\text{max}}}
\nc{\lmin}{\lambda_{\text{min}}}
\nc{\csupp}{{\operatorname{csupp}}}
\nc{\qsupp}{{\operatorname{qsupp}}} \nc{\var}{\operatorname{var}}
\nc{\rar}{\rightarrow} \nc{\lrar}{\longrightarrow}
\nc{\poly}{\operatorname{poly}}
\nc{\polylog}{\operatorname{polylog}} \nc{\Lip}{\operatorname{Lip}}
\nc{\Om}{\Omega}
\nc{\wt}[1]{\widetilde{#1}}
\def\>{\rangle}
\def\<{\langle}
\def\bu{{\it \textbf{u}}}
\def\bx{\textbf{x}}
\def\pxutot {P(\bx_1,\ldots,\bx_n|\bu_1,\ldots,\bu_n,\Eveoutput, e)}
\def\xuseq{\Eveoutput,e, \bx_1,\bu_1,\ldots,\bx_n,\bu_n}
\nc{\glneq}{{\raisebox{0.6ex}{$>$}  \hspace*{-1.8ex} \raisebox{-0.6ex}{$<$}}}
\nc{\gleq}{{\raisebox{0.6ex}{$\geq$}\hspace*{-1.8ex} \raisebox{-0.6ex}{$\leq$}}}
\nc{\vholder}[1]{\rule{0pt}{#1}}
\nc{\wh}[1]{\widehat{#1}}
\nc{\h}[1]{\widehat{#1}}
\nc{\ob}[1]{#1}
\def\beq{\begin {equation}}
\def\eeq{\end {equation}}
\def\be{\begin{equation}}
\def\ee{\end{equation}}
\def\ACC{\text{ACC}}
\def\ACCd{\text{ACC}}
\def\Lazuma{L}
\def\Lazuma{L}
\nc{\eq}[1]{(\ref{eq:#1})} 
\nc{\eqs}[2]{\eq{#1} and \eq{#2}}
\nc{\eqn}[1]{Eq.~(\ref{eqn:#1})}
\nc{\eqns}[2]{Eqs.~(\ref{eqn:#1}) and (\ref{eqn:#2})}
\nc{\region}{\cS\cW}
\newenvironment{protocol*}[1]
  {
    \begin{center}
      \hrulefill\\
      \textbf{#1}
  }
  {
    \vspace{-1\baselineskip}
    \hrulefill
    \end{center}
  }
\begin{document}


\title{Randomness amplification under minimal fundamental assumptions on the devices}

\author{Ravishankar Ramanathan}
\affiliation{Institute of Theoretical Physics and Astrophysics, National Quantum Information Centre, Faculty of Mathematics, Physics and Informatics, University of Gda\'nsk, 80-308 Gda\'nsk, Poland}

\author{Fernando G.S.L. Brand\~{a}o}
\affiliation{Quantum Architectures and Computation Group, Microsoft Research, Redmond, Washington 98052, USA}
\affiliation{Department of Computer Science, University College London, WC1E 6BT London, UK}

\author{Karol Horodecki}
\affiliation{Institute of Informatics, National Quantum Information Centre, Faculty of Mathematics, Physics and Informatics, University of Gda\'nsk, 80-308 Gda\'nsk, Poland}

\author{Micha{\l} Horodecki}
\affiliation{Institute of Theoretical Physics and Astrophysics, National Quantum Information Centre, Faculty of Mathematics, Physics and Informatics, University of Gda\'nsk, 80-308 Gda\'nsk, Poland}

\author{Pawe{\l} Horodecki}
\affiliation{Faculty of Applied Physics and Mathematics, National Quantum Information Center, Gda\'nsk University of Technology, 80-233 Gda\'nsk, Poland}

\author{Hanna Wojew\'{o}dka}
\affiliation{Institute of Theoretical Physics and Astrophysics, National Quantum Information Centre, Faculty of Mathematics, Physics and Informatics, University of Gda\'nsk, 80-308 Gda\'nsk, Poland}
\affiliation{Institute of Mathematics, Faculty of Mathematics, Physics and Chemistry, University of Silesia, Bankowa 14, 40-007 Katowice,
Poland}

\date{\today}

\begin{abstract}
Recently, the physically realistic protocol amplifying the randomness of Santha-Vazirani sources producing cryptographically secure random bits 
was proposed; however for reasons of practical relevance, the crucial question remained open whether this can be accomplished under the minimal conditions necessary for the task. Namely, is it possible to achieve randomness amplification using only two no-signaling components and in a situation where the violation of a Bell inequality only guarantees that some outcomes of the device for specific inputs exhibit randomness?
Here, we solve this question and present a device-independent protocol for randomness amplification of Santha-Vazirani sources using a device consisting of two non-signaling components. We show that the protocol can amplify any such source that is not fully deterministic into a fully random source while tolerating a constant noise rate and prove the composable security of the protocol against general no-signaling adversaries. Our main innovation is the proof that even the partial randomness certified by the two-party Bell test (a single input-output pair ($\textbf{u}^*, \textbf{x}^*$) for which the conditional probability $P(\textbf{x}^* | \textbf{u}^*)$ is bounded away from $1$ for all no-signaling strategies that optimally violate the Bell inequality) can be used for amplification. We introduce the methodology of a partial tomographic procedure on the empirical statistics obtained in the Bell test that ensures that 
the outputs constitute a linear min-entropy source of randomness. As a technical novelty that may be of independent interest, we prove that the Santha-Vazirani source satisfies an exponential concentration property given by a recently discovered generalized Chernoff bound. 
\end{abstract}

\maketitle

\textit{Introduction.-}
Random number generators are ubiquitous, finding applications in varied domains such as statistical sampling, computer simulations and gambling scenarios. Certain physical phenomena such as radioactive decay or thermal radiation have high natural entropy, there are also computational algorithms that produce sequences of apparently random bits. In many cryptographic tasks however, it is necessary to have trustworthy sources of randomness. As such, developing device-independent protocols for generating random bits is of paramount importance. 

We consider the task of randomness amplification, to convert a source of partially random bits to one of fully random bits. The paradigmatic model of a source of randomness is the Santha-Vazirani (SV) source \cite{SV}, a model of a biased coin where the individual coin tosses are not independent but rather the bits $Y_i$ produced by the source obey
\begin{equation}
\label{SVdef}
\frac{1}{2} - \varepsilon \leq P(Y_i = 0 | Y_{i-1}, \dots, Y_1) \leq \frac{1}{2} + \varepsilon.
\end{equation}
Here $0 \leq \varepsilon < \frac{1}{2}$ is a parameter describing the reliability of the source, the task being to convert a source with $\varepsilon < \frac{1}{2}$ into one with $\varepsilon \rightarrow 0$. Interestingly, this task is known to be impossible with classical resources, a single SV source cannot be amplified \cite{SV}.

In \cite{Renner}, the non-local correlations of quantum mechanics were shown to provide an advantage in the task of amplifying an SV source. A device-independent protocol for generating truly random bits was demonstrated starting from a critical value of $\varepsilon (\approx 0.06)$ \cite{Renner, Grudka}, where device-independence refers to the fact that one need not trust the internal workings of the device. An improvement was made in \cite{Acin} where using an arbitrarily large number of spatially separated devices, it was shown that one could amplify randomness starting from any initial $\varepsilon < \frac{1}{2}$. In \cite{our2}, we demonstrated a device-independent protocol which uses a \textit{constant} number of spatially separated components and amplifies sources of arbitrary initial $\varepsilon < \frac{1}{2}$ while simultaneously tolerating a constant amount of noise in its implementation. All of these protocols were shown to be secure against general adversaries restricted only by the no-signaling principle of relativity under a technical assumption of independence between the source and the device. In \cite{CSW14}, a randomness amplification protocol was formulated for general min-entropy sources and shown to be secure against quantum adversaries without the independence assumption, the drawback of this protocol being that it requires a device with a large number of spatially separated components for its implementation. Other protocols have also been proposed \cite{Pawlowski, Plesch}, for which full security proofs are missing. For fundamental as well as practical reasons, it is vitally important to minimize the number of spatially separated components in the protocol. As such, devising a protocol with the minimum possible number of components (two space-like separated ones for a protocol based on a Bell test) while at the same time, allowing for robustness to errors in its implementation is crucial. 

Let $\textbf{U}, \textbf{X}$ denote the input and output sets respectively, of honest parties in a device-independent Bell-based protocol for randomness amplification. 
A necessary condition for obtaining randomness against general no-signaling (NS) attacks is that 
for some input $\textbf{u}^* \in \textbf{U}$, output $\textbf{x}^* \in \textbf{X}$ and a constant $c < 1$, 
\textit{every} no-signaling box $\{P(\textbf{x} | \textbf{u})\}$ that obtains the observed Bell violation has $P(\textbf{x} = \textbf{x}^* | \textbf{u} = \textbf{u}^*) \leq c$. i.e., 
\ben
\label{min-req}
\exists (\textbf{x}^*, \textbf{u}^*) \; \; \text{s.t.} \; \; &&\forall \{P(\textbf{x} | \textbf{u})\} \; \; \text{with} \; \; \textbf{B} \cdot \{P(\textbf{x} | \textbf{u})\} = 0 \nonumber \\
&&P(\textbf{x} = \textbf{x}^* | \textbf{u} = \textbf{u}^*) \leq c < 1,
\een
where $\textbf{B}$ is an indicator vector (with entries $B(\textbf{x}, \textbf{u})$) encoding the Bell expression and $\textbf{B} \cdot \{P(\textbf{x} | \textbf{u})\} = \sum_{\textbf{x}, \textbf{u}} B(\textbf{x}, \textbf{u}) P(\textbf{x} | \textbf{u}) = 0$ denotes that the box $\{P(\textbf{x} | \textbf{u})\}$ algebraically violates the inequality. 
Note that while the Bell inequality violation guarantees Eq.(\ref{min-req}) for some $\textbf{x}^*, \textbf{u}^*$ for each NS box, here the requirement is for a strictly bounded \textit{common} entry $P(\textbf{x} = \textbf{x}^* | \textbf{u} = \textbf{u}^*)$ for all boxes leading to the observed Bell violation. It is straightforward to see that if Eq. (\ref{min-req}) is not met, then the observed Bell violation does not guarantee any randomness and a device-independent protocol to obtain randomness cannot be built on the basis of this violation. 
If in addition to the necessary condition in Eq. (\ref{min-req}), we also had for the same input-output pair $(\textbf{u}^*, \textbf{x}^*)$ that 
\be
\label{eq:suff-cond} 
\tilde{c} \leq P(\textbf{x} = \textbf{x}^* | \textbf{u} = \textbf{u}^*) 
\ee
for some constant $\tilde{c} > 0$, then clearly all the outputs for input $\textbf{u}^*$ possess randomness and extraction of this randomness may be feasible.  

Here, we present a fully device-independent protocol that allows to amplify the randomness of any $\varepsilon$-SV source under the minimal necessary condition in Eq. (\ref{min-req}). A novel element of the protocol is an additional test (to the usual Bell test) akin to partial tomography of the boxes that the honest parties perform, to lower bound (in a linear number of runs) $P(\textbf{x} = \textbf{x}^* | \textbf{u} = \textbf{u}^*) =: \textbf{D} \cdot \{P(\textbf{x} | \textbf{u}) \}$. Here $\textbf{D}$ is an indicator vector with entries $D(\textbf{x}, \textbf{u})$ such that $D(\textbf{x}, \textbf{u}) = 1$ iff $(\textbf{x}, \textbf{u}) = (\textbf{x}^*, \textbf{u}^*)$. This test ensures that additionally Eq.(\ref{eq:suff-cond}) is also met for a sufficient number of runs, a detailed description is provided in the Supplemental Material. The protocol uses a device consisting of only two no-signaling components and tolerates a constant error rate. We show that the output bits from the protocol satisfy universally-composable security, the strongest form of cryptographic security, for any adversary limited only by the no-signaling principle. 

\textit{Main Result.-}
We present a two-party protocol to amplify the randomness of SV sources against no-signaling adversaries, formally we show the following (the detailed security proof is presented in the Supplemental Material):

\begin{thm} [informal]  \label{mainthm}
For every $\varepsilon < \frac{1}{2}$, there is a protocol using an $\varepsilon$-SV source and a device consisting of two no-signaling components with the following properties:

\begin{itemize}
\item Using the device $\poly(n, \log(1/\gamma))$ times, the protocol either aborts or produces $n$ bits which are $\gamma$-close to uniform and independent of any no-signaling side information about the device and classical side information about the source (e.g. held by an adversary).

\item Local measurements on many copies of a two-party entangled state, with $\poly(1 - 2\varepsilon)$ error rate, give rise to a device that does not abort the protocol with probability larger than $1 - 2^{- \Omega(n)}$.

\end{itemize}
The protocol is non-explicit and runs in $\poly(n, \log{(1/\gamma)})$ time. {Alternatively it can use an explicit extractor to produce a single bit of randomness that is $\gamma$-close to uniform in $\poly(\log{(1/\gamma)})$ time.}
\end{thm}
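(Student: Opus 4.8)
The plan is to wrap a Bell test and a partial‑tomography test around the two‑component device, abort if either fails, extract from the surviving outputs, and argue completeness and soundness separately. Concretely: draw the measurement settings for $N=\poly(n,\log(1/\gamma))$ runs from the $\varepsilon$‑SV source, record all input--output pairs, and accept only if (i) the empirical estimate of the Bell functional $\textbf{B}\cdot\{P(\textbf{x}|\textbf{u})\}$ lies within $\delta=O(\poly(1-2\varepsilon))$ of its algebraic value $0$, and (ii) the empirical frequency of the pair $(\textbf{x}^*,\textbf{u}^*)$ among the $\textbf{u}^*$‑runs certifies $\textbf{D}\cdot\{P(\textbf{x}|\textbf{u})\}\ge\tilde c$ for a fixed $\tilde c>0$. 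On abort the protocol outputs nothing; otherwise it collects the outputs of the $\textbf{u}^*$‑runs into a string $Z$ and applies to $Z$ a randomness extractor seeded by a fresh block of the SV source, outputting $n$ bits.

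\emph{Completeness.} Local measurements on many copies of the entangled state, with error rate $O(\poly(1-2\varepsilon))$, make the empirical estimates of $\textbf{B}\cdot\{P(\textbf{x}|\textbf{u})\}$ and $\textbf{D}\cdot\{P(\textbf{x}|\textbf{u})\}$ concentrate exponentially around their (accepting) quantum values, so both tests pass except with probability $2^{-\Omega(n)}$. The subtlety is that the runs are not i.i.d. — the settings come from a correlated, biased SV source and may be correlated with the honest device — so this is exactly where the claimed generalized Chernoff bound for SV sources is used: since Eq.~(\ref{SVdef}) bounds every conditional input probability away from $0$ and $1$, sums of bounded functions of the run data still concentrate exponentially. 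The same bound shows that $\textbf{u}^*$ occurs $\Omega(N)$ times except with probability $2^{-\Omega(n)}$, so $Z$ is long.

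\emph{Soundness / linear min‑entropy (the crux).} Against a cheating adversary the device is an $N$‑run no‑signaling box $P(\textbf{x}_1,\dots,\textbf{x}_N\mid\textbf{u}_1,\dots,\textbf{u}_N,e)$ correlated with classical source side information $e'$, and may correlate runs adaptively. First prove a robust form of the necessary condition Eq.~(\ref{min-req}): by continuity on the compact no‑signaling polytope, any single‑run box whose Bell value is within $\delta$ of $0$ has $P(\textbf{x}^*\mid\textbf{u}^*)\le c+O(\delta)$, still strictly below $1$. Then, conditioned on acceptance, pass from the global box to per‑run marginals by no‑signaling and argue that on the good runs the conditional probability of output $\textbf{x}^*$ on input $\textbf{u}^*$, given the preceding transcript and $(e,e')$, is at most $c+O(\delta)$; since all these conditional probabilities are $<1$, the probability of any fixed output string on the $m=\Omega(N)$ many $\textbf{u}^*$‑runs is at most $(c+O(\delta))$ raised to the number of those runs whose output equals $\textbf{x}^*$, which the tomography test forces to be $\ge\tilde c\,m$ on the accept event. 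Hence $Z$ has min‑entropy $\ge\kappa N$ for a constant $\kappa=\kappa(c,\tilde c,\varepsilon)>0$ conditioned on $(e,e')$. I expect the main obstacle to be making this per‑run bound valid after conditioning on the Bell‑acceptance event, which is a property of the whole transcript rather than a single run: this is handled by an Azuma‑type martingale estimate controlling how the accepted empirical Bell value constrains the per‑run marginals, together with propagating all statistical errors and accounting for conditioning on the non‑abort event (which degrades the exponential bounds by only a harmless factor when non‑abort is non‑negligible, and if it is negligible the composable requirement holds trivially since abort is permitted).

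\emph{Extraction and composability.} Feed $Z$ (a deterministic function of the device outputs, min‑entropy $\ge\kappa N$ given $(e,e')$) and a fresh SV block (min‑entropy rate $-\log_2(\tfrac12+\varepsilon)>0$, independent of the device under the source--device independence assumption carried over from prior work) into a two‑source extractor secure against no‑signaling side information. A dimension count shows such an extractor exists non‑explicitly and is evaluable in $\poly(n,\log(1/\gamma))$ time, producing $n$ output bits $\gamma$‑close to uniform jointly with $(e,e')$; for a single output bit an explicit no‑signaling‑secure two‑source extractor (e.g.\ the inner‑product construction) suffices in $\poly(\log(1/\gamma))$ time. Finally, choosing $N=\poly(n,\log(1/\gamma))$ large enough that the completeness error, the martingale/Azuma error, and the extractor error are each $\le\gamma/3$, and recalling that on abort nothing is output, yields the two bulleted properties and hence universally‑composable security against no‑signaling adversaries.
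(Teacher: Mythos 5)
Your high-level structure --- two tests (Bell plus partial tomography), an SV generalized Chernoff bound for completeness, an Azuma--Markov argument to pass from empirical Bell value to per-run conditional boxes, a linear-programming robustness bound showing $P(\textbf{x}^*\mid\textbf{u}^*)$ stays bounded away from $1$ on Bell-good runs, a two-source extractor, and a $d_c\cdot p(\ACC)\le 2^{-\Omega(n)}$ handling of the abort event --- matches the paper's. Where you diverge is in the crux, the min-entropy argument, and the divergence is worth understanding precisely. You argue: the tomography test forces at least $\mu_1 n$ positions whose \emph{observed} output is $\textbf{x}^*$; each such factor in $\prod_i q(\textbf{x}_i\mid\textbf{x}_{<i},u,z,e)$ is $\le c'$ on Bell-good runs; hence $q(x\mid\cdots)\le (c')^{\Omega(n)}$. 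The paper instead runs a \emph{second} Azuma argument on the tomography statistic $S_n$ to conclude that on a linear fraction of runs the \emph{conditional probability} $q(\textbf{x}^*\mid\textbf{u}^*,\cdots)\ge\kappa$, and then (Lemma 4) the intersection with Bell-good runs gives positions where \emph{every} outcome has conditional probability at most $\max\{1-\kappa,\,c'\}<1$, feeding this into Lemma 5. The reason the paper takes the longer route is that its extractor consumes all $n$ outputs, so it needs every factor in the product bounded below $1$, not only the $\textbf{x}^*$-factors; you sidestep that by feeding only the $\textbf{u}^*$-run outputs into the extractor.

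Two places where your argument is underspecified and would need the paper's machinery anyway. First, you never make the intersection argument: the set of Bell-good positions (derived from Azuma plus the Markov-type Lemma 3) has size $\ge(1-\sqrt{\delta+\delta_{Az}})n$ but is a function of the whole transcript, while the set of $\textbf{x}^*$-positions has size $\ge\mu_1 n$ --- you tacitly assume these overlap on $\Omega(n)$ positions, which requires choosing $\delta$ small enough that $\mu_1>\sqrt{\delta+\delta_{Az}}$ (this is exactly the paper's constraint $\delta<\tfrac12[\tfrac{\mu_1-2\kappa}{2(1-\kappa)}]^2$ in a slightly different guise). Without this, an adversary could in principle concentrate all $\textbf{x}^*$ outcomes at Bell-bad positions, pass both tests, and leave you with no certified entropy. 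Second, restricting the extractor input to the $\textbf{u}^*$-runs makes the source a variable-length string whose length and support depend on the (adversarially-biased, but SV-constrained) input $u$; the paper avoids this by using the full fixed-length output $x$, and you would need to pad $Z$ or otherwise handle the dependence of the source's alphabet on $u$. Both points are fixable, but they are exactly the place where the soundness argument can silently fail, so they should be made explicit.
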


%
%
%


\begin{figure}
\begin{protocol*}{Protocol I}
\begin{enumerate}
\item The $\varepsilon$-SV source is used to choose the measurement settings $u = (\bu^1_{\leq n}, \bu^2_{\leq n})$ for $n$ runs on the single device consisting of two components. The device produces output bits $x = (\bx^1_{\leq n}, \bx^2_{\leq n})$.
\item The parties perform an estimation of the violation of the Bell inequality in the  device by computing the empirical average $\Lazuma_n(x,u) \defeq  \frac{1}{n} \sum_{i=1}^{n} B(\textbf{x}_i, \textbf{u}_i)$. The protocol is aborted unless $\Lazuma_n(x,u) \leq \delta$ for fixed constant $\delta > 0$. 
\item Conditioned on not aborting in the previous step, the parties subsequently check if $\textit{S}_n(x,u) \defeq \frac{1}{n} \sum_{i=1}^{n} D(\textbf{x}_i, \textbf{u}_i) \geq \mu_1$. The protocol is aborted if this condition is not met for fixed $\mu_1 > 0$.  
\item Conditioned on not aborting in the previous steps, the parties apply an independent source extractor \cite{CG, one_bit_extr} to the sequence of outputs from the device and further $n$ bits from the SV source.
\end{enumerate}
\end{protocol*}
\caption{Protocol for device-independent randomness amplification from a single device with two no-signaling components.}
\label{protocolsingle}
\end{figure}

\textit{Protocol.-}
The protocol for the task of randomness amplification from $\varepsilon$-SV sources is given explicitly in Fig. \ref{protocolsingle} and illustrated in Fig. \ref{fig:bip-fig-prot}, its structure is as follows. The two honest parties Alice and Bob use bits from the $\varepsilon$-SV source to choose the inputs to their no-signaling boxes in multiple runs of a Bell test and obtain their respective outputs. They check for the violation of a Bell inequality and abort the protocol if the test condition is not met. The novel part of the protocol is a subsequent test that the honest parties perform which ensures when passed that a sufficient number of runs were performed with 
boxes that have randomness in their outputs. If both tests are passed, the parties apply a randomness extractor to the output bits and some further bits taken from the SV source. The output bits of the extractor constitute the output of the protocol, which we show to be close to being fully random and uncorrelated from any no-signaling adversary. 

\begin{figure}[t]
\includegraphics[scale=0.4]{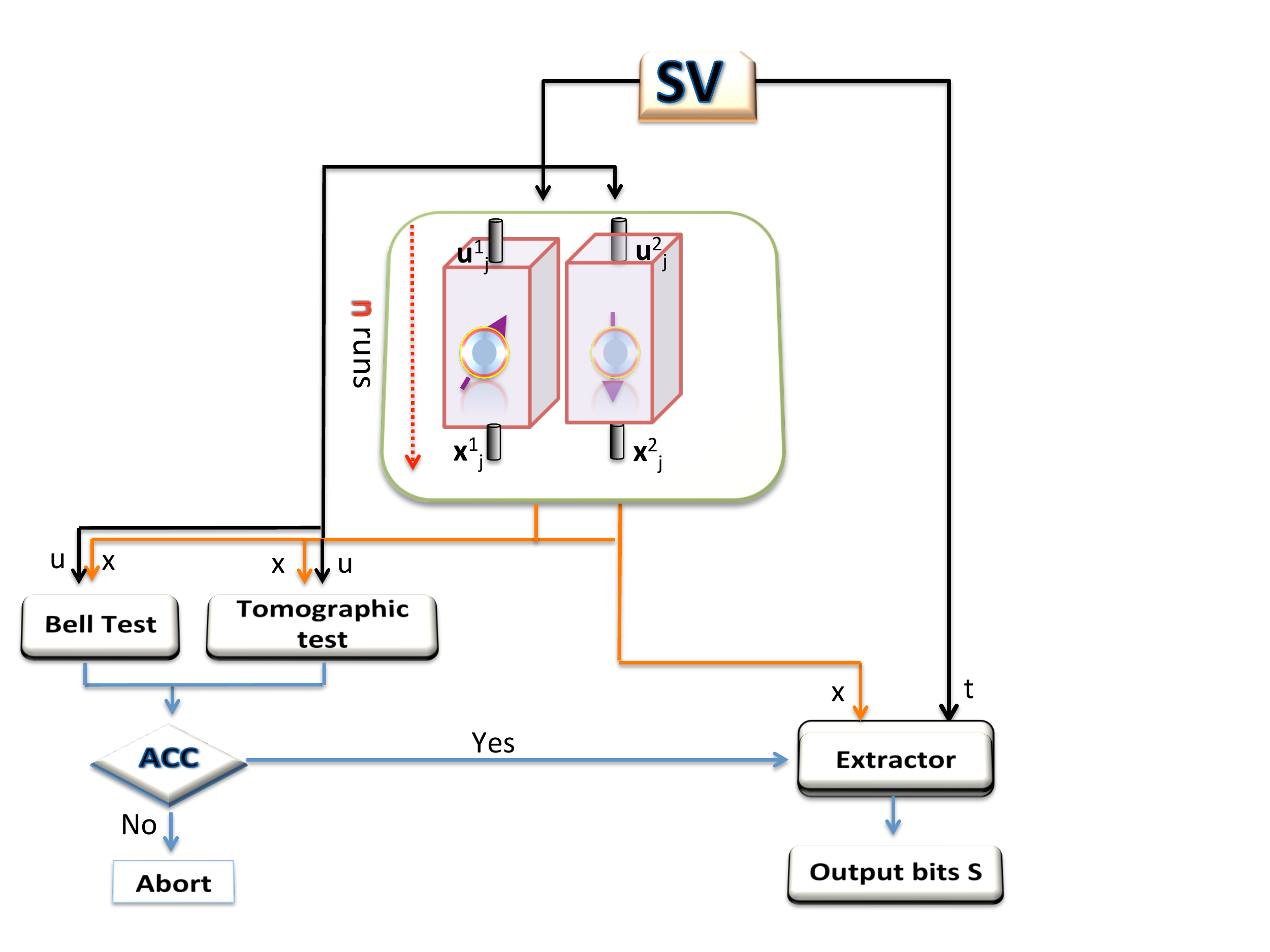}
\caption{An illustration of the protocol for randomness amplification using two no-signaling components. The bits from the SV source (black arrows) are used as inputs $(\textbf{u}^1_j, \textbf{u}^2_j)$ for the $j$-th run of the two spatially separated devices, with $1 \leq j \leq n$, and the corresponding outputs $(\textbf{x}^1_j, \textbf{x}^2_j)$ are obtained. The inputs and outputs of all the $n$ runs $(u,x)$ are subjected to two tests: a Bell test for the violation of a specific Bell inequality and a (partial) tomographic test counting a specific number of input-output pairs $(\textbf{u}^*, \textbf{x}^*)$. If both tests are passed (denoted by $\ACC$), the outputs $x$ (orange arrows) are hashed together with further $n$ bits $t$ from the SV source using an extractor.}
\label{fig:bip-fig-prot}
\end{figure}

\textit{Description of the setup.-}
The setup of the protocol is as follows. The honest parties and Eve share a no-signaling box $\{p(x, z|u', w)\}$ where $u' = \textbf{u'}_{\leq n} := (\textbf{u'}_{1}, \dots, \textbf{u'}_n)$ and $x = \textbf{x}_{\leq n} := (\textbf{x}_1, \dots, \textbf{x}_n)$ denote the input and output,  respectively, of the honest parties for the $n$ runs of the protocol, with $w$ and $z$ being the inputs and outputs of the adversary Eve. The devices held by the honest parties are separated into $2$ components with corresponding inputs and outputs $u'^i$ and $x^i$, respectively, for $i = 1, 2$, i.e., $u' = (u'^1, u'^2)$ and $x = (x^1, x^2)$. Note that $u'^i, x^i$ themselves denote the inputs and outputs of the $n$ runs of the protocol for party $i$, i.e., $u'^i = \textbf{u'}^i_{\leq n} := (\textbf{u'}^i_1, \dots, \textbf{u'}^i_n)$ and $x^i = \bx^i_{\leq n} = (\bx^i_1, \dots, \bx^i_n)$. 
Here, for the $j$-th run of the Bell test, we have labeled the measurement settings of Alice $\textbf{u'}^1_j$ and those of Bob $\textbf{u'}^2_j$ with the corresponding outcomes $\textbf{x}^1_j$ and $\textbf{x}^2_j$, and denoted the joint inputs of Alice and Bob in this run as $\textbf{u'}_j = (\textbf{u'}^1_j, \textbf{u'}^2_j)$ with corresponding joint output $\textbf{x}_j = (\textbf{x}^1_j, \textbf{x}^2_j)$. 
The honest parties draw bits $u$ from the SV source to input into the box, i.e., they set $u' = u$. They also draw further $n$ bits $t$, which will be fed along with the outputs $x$ into the randomness extractor to obtain the output of the protocol $s := \text{Ext}(x,t)$. The adversary has classical information $e$ correlated to $u, t$. The box we consider for the protocol is therefore given by the family of probability distributions $\{p(x,z,u,t,e|u',w)\}$.  

\textit{Assumptions.-}
Let us first state formally the assumptions on $\{p(x,z,u,t,e|u',w)\}$, see also \cite{our2}. 
\begin{itemize}
\item {\bf No-signaling (shielding) assumption:} 
The box satisfies the constraint of no-signaling between the honest parties and Eve as well as between the different components of the device
\ben 
\label{eq:ns1}
p(x|u', w) &=& p(x|u'), \nonumber \\
p(z|u',w) &=& p(z|w), \nonumber \\
p(x^i|u') &=& p(x^i|u'^i) \; \; \; i = 1, 2.
\label{eq:as-nosig}
\een

Each device component also obeys a time-ordered no-signaling (\texttt{tons}) condition for the $k \in [n]$ runs performed on it:
\begin{eqnarray}
\label{eq:tons1}
&&p(x^i_{k}|z,u'^i, w, u, t, e) = \nonumber \\
&& \; \; \;   p(x^i_{k} | z,u'^i_{\leq k},w,u, t, e) \; \; \; \forall k \in [n]
\label{eq:tons}
\end{eqnarray}
where $u'^i_{\leq k} := u'^i_{1}, \ldots, u'^i_{k}$.

\item {\bf SV conditions:} 
The variables $(u,t,e)$ form an SV source, that is satisfy 
Eq. (\ref{SVdef}). 
In particular,
$p(t|u,e)$ and $p(u|e)$ also obey the SV source conditions. The fact that $e$ cannot be perfectly correlated to $u$, $t$ is called the \textit{private SV source assumption} \cite{our2}.

\item {\bf Assumption A1:} The devices do not signal to the SV source, i.e., the distribution of $(u,t,e)$ is 
independent of the inputs $(u',w)$:
\begin{eqnarray}
&&\sum_{x,z} p(x,z,u,t,e|u',w) = p(u,t,e) \; \; \; \forall{(u, t, e, u', w)}.
\label{eq:assumption1}
\end{eqnarray}

\item {\bf Assumption A2:} The box is fixed independently of the SV source:
\begin{eqnarray}
&&p(x,z|u',w,u,t,e) = p(x,z|u',w) \; \;  \forall{(x, z, u',w, u, t,e)}. \nonumber \\
\label{eq:assumption2}
\end{eqnarray}
\end{itemize}
In words, the main assumptions are that the different components of the device do not signal to each other and to the adversary Eve. Additionally, there is also a time-ordered no-signaling ($tons$) structure assumed on different runs of a single component, the outputs in any run may depend on the previous inputs within the component but not on future inputs. Moreover, we also assume that the structure of the box $p(x,z|u',w)$ is fixed independently of the SV source $p(u,t,e)$, i.e., the box is an unknown and arbitrary input-output channel independent of the SV source. This precludes malicious correlations such as in the scenario where for each bit string $u$ taken from the source, a different (possibly local) box tuned to $u$ is supplied, in which case the Bell test may be faked by local boxes \cite{our3}. Finally, it is worth noting that no randomness may be extracted under the assumptions stated above in a classical setting, whereas the Bell violation by quantum boxes allows to amplify randomness in a device-independent setting. 

\textit{Security definition.-}
For $\Lazuma_n(x,u) = \frac{1}{n} \sum_{i=1}^{n} B(\textbf{x}_i, \textbf{u}_i)$, the first (Bell) test in the protocol is passed when $\Lazuma_n(x,u) \leq \delta$. We define the set $\ACC_1$ as the set of $(x,u)$ such that this test 
is passed:
\be 
\ACC_1 \defeq \{(x,u) : \Lazuma_n(x,u) \leq \delta\}. 
\ee
The $\delta$ is the noise parameter in the Bell test which is chosen to be a positive constant depending on the initial $\varepsilon$ of the SV source, going to zero in the limit of $\varepsilon \rightarrow \frac{1}{2}$ 
(see Theorem 8 in the Supplemental Material). 
Similarly, we define $\ACC_2$ as the set of $(x,u)$ for which the second test is passed, i.e., 
\be 
\ACC_2 \defeq \left\lbrace (x,u) : \textit{S}_n(x,u) \geq \mu_1 \right\rbrace.
\ee
We also define the set $\ACC = \ACC_1 \cap \ACC_2$ of $(x,u)$ for which both tests in the protocol are passed and $\ACC_u$ as the cut
\be 
\ACC_u \defeq \{ x : (x,u) \in \ACC \}.
\ee 

After $u$ is input as $u'$ and conditioned on the acceptance of the tests $\ACC$, applying the independent source extractor \cite{CG, Xin-Li, one_bit_extr} $s=\text{Ext}(x,t)$, one gets the box
\begin{eqnarray}
&&p(s,z,e|w,\ACC ) \nonumber \\
&&\; \; \equiv  \sum_{u}\sum_{\text{Ext}(x,t)=s}   p(x,z,u,t,e|w,\ACC ). 
\end{eqnarray}

The composable security criterion is now defined in terms of the distance 
of $p(s,z,e|w, \ACC )$ to an ideal box $p^{id} = \frac{1}{|S|} p(z,e|w,\ACC)$ with $p(z,e|w,\ACC) = \sum_{s} p(s,z,e|w,\ACC)$. Formally, the security is given by the distance $d_c$ defined as 
\be
\label{dist-comp}
d_c :=\sum_{s,e} \max_{w } \sum_{z} \left |p(s,z,e|w, \ACC ) - \frac{1}{|S|}p(z,e|w, \ACC )\right|.
\ee


\textit{Outline of the proof.-}
The proof of security of the protocol is a modification of the proof we presented in \cite{our2} with the crucial differences being due to the weak randomness that the two-party Bell inequality violation gives and an additional partial tomographic test imposed on the device. 

To amplify SV sources, one needs Bell inequalities where quantum theory can achieve the maximal no-signaling value of the inequality \cite{Renner}, failing which, for sufficiently small $\varepsilon$, the observed correlations may be faked with classical deterministic boxes. However, Bell inequalities with this property are not sufficient, this is exemplified by the tripartite Mermin inequality \cite{Mermin, Renner}. This inequality is algebraically violated in quantum theory using a GHZ state, however for any function of the measurement outcomes one can find no-signaling boxes which achieve its maximum violation and for which this particular function is deterministic thereby providing an attack for Eve to predict with certainty the final output bit. While \cite{Acin} and \cite{our2} considered Bell inequalities with more parties, the problem of finding two-party algebraically violated Bell inequalities (known as pseudo-telepathy games) \cite{RMP-Bell14} with the property of randomness for some function of the measurement outcomes was open. Unfortunately, none of the bipartite Bell inequalities tested so far have the property that \textit{all} no-signaling boxes which maximally violate the inequality have randomness in any function of the measurement outcomes $f(\textbf{x})$ for some input $\textbf{u}$ in the sense that for all such boxes 
\be
\label{eq:strong-rand} 
\frac{1}{2} - \kappa \leq p(f(\textbf{x})|\textbf{u}) \leq \frac{1}{2} + \kappa
\ee
for some $0 < \kappa < \frac{1}{2}$. 
We say that Bell inequalities with property (\ref{eq:strong-rand}) guarantee \textit{strong randomness}. 

The Bell inequality we consider for the task of randomness amplification is a modified version of a Kochen-Specker game from \cite{Aolita}. The inequality involves two parties Alice and Bob, each making one of nine possible measurements and obtaining one of four possible outcomes, which is explained further in the Supplemental Material. Even though it does not guarantee the strong randomness in Eq.(\ref{eq:strong-rand}) for any function of the measurement outcomes $f(\textbf{x})$ for any input $\textbf{u}$, it has the redeeming feature of guaranteeing \textit{weak randomness} in the following sense. For all no-signaling boxes which algebraically violate the inequality, there exists one measurement setting $\textbf{u}^*$ and one outcome $\textbf{x}^*$ for this setting such that 
\ben
&&0 \leq p(\textbf{x} = \textbf{x}^* | \textbf{u} = \textbf{u}^*) \leq \gamma \nonumber \\
&& \forall \{p(\textbf{x}| \textbf{u})\} \; \; \; \text{s.t} \; \; \; \textbf{B} \cdot \{p(\textbf{x}| \textbf{u})\} = 0  
\een 
for some $0 < \gamma < 1$. The above fact is checked by linear programming and is shown in Lemma $1$ in the Supplemental Material. 

The novel technique in the form of a partial tomographic test, subsequent to the Bell test, allows us to extract randomness in this minimal scenario of weak randomness. This simply checks for the number of times a particular input-output pair $(\textbf{u}^*, \textbf{x}^*)$ appears, 
the analysis of this test is done by an application of the Azuma-Hoeffding inequality. We show that the SV source obeys a generalized Chernoff bound that ensures that with high probability when the inputs are chosen with such a source, the measurement setting $\textbf{u}^*$ appears in a linear fraction of the runs. Thus, conditioned on both tests in the protocol being passed (which happens with large probability with the use of the SV source and good quantum boxes by the honest parties), we obtain that with high probability over the input, the output is a source of linear min-entropy. 

This allows us to use known results on randomness extractors for two independent sources of linear min-entropy \cite{CG, one_bit_extr}, namely one given by the outputs of the measurement and the other given by the SV source. As shown in Proposition $16$ of \cite{our2}, one can use extractors secure against classical side information even in the scenario of general no-signaling adversaries by accepting a loss in the rate of the protocol, i.e., increasing the output error. The randomness extractor used in the protocol is a non-explicit extractor from \cite{CG}. Alternatively, there is an explicit extractor that can be employed in the protocol that has been found recently \cite{one_bit_extr}, but then it can produce just one bit of randomness. It also follows from \cite{our2} that there exists a protocol to obtain more bits with an explicit extractor using a device with three no-signaling components by employing additionally a de-Finetti theorem for no-signaling devices \cite{Brandao} 
(see Protocol II in \cite{our2}).

\textit{Conclusion and Open Questions.-}
We presented a device-independent protocol to amplify randomness in the minimal conditions under which such a task is possible, and used it to obtain secure random bits from an arbitrarily (but not fully) deterministic Santha-Vazirani source. The protocol uses a device consisting of only two non-signaling components, and works with correlations attainable by noisy quantum mechanical resources. Moreover, its correctness is not based on quantum mechanics and only requires the no-signaling principle. 

Important open questions still remain. 
One interesting question is whether the requirement of strict independence between the SV source and the devices can be relaxed to only require limited independence \cite{our3}. Another is to amplify the randomness of more general min-entropy sources that do not possess the structure of the Santha-Vazirani source. Finally, a significant open problem is to realize device-independent quantum key distribution with an imperfect source of randomness, tolerating a constant error rate and achieving a constant key rate.  

{\it Acknowledgments.}
The paper is supported by ERC AdG grant QOLAPS, EU grant RAQUEL and by Foundation for Polish Science TEAM project co-financed by the EU European Regional Development Fund. FB acknowledges support from EPSRC and Polish Ministry of Science and Higher Education Grant no.
IdP2011 000361. Part of this work was done in National Quantum Information Center of Gda\'{n}sk. 
Part of this work was done when F. B., R. R., K. H. and M. H. attended the program ``Mathematical Challenges in Quantum Information" at the Isaac Newton Institute for Mathematical Sciences in the University of Cambridge. 
 
\bibliographystyle{apsrev}


\textbf{Supplemental Material.}
Here, we give the formal proof of composable security for the device-independent protocol for randomness amplification using a device consisting of only two no-signaling components presented in the main text. 

Let us recall that the SV source is defined by the condition that bits $Y_i$ produced by the source obey
\begin{equation}
\label{SVdef-sup}
\frac{1}{2} - \varepsilon \leq P(Y_i = 0 | Y_{i-1}, \dots, Y_1) \leq \frac{1}{2} + \varepsilon
\end{equation}
for some $0 \leq \varepsilon < \frac{1}{2}$.
Let us also recall the notation from the main text. The honest parties and Eve share a no-signaling box $\{p(x, z|u', w)\}$ where $u' = \textbf{u'}_{\leq n} := (\textbf{u'}_{1}, \dots, \textbf{u'}_n)$ and $x = \textbf{x}_{\leq n} = (\textbf{x}_1, \dots, \textbf{x}_n)$ denote the input and output respectively of the honest parties for the $n$ runs of the protocol, with $w$ and $z$ the respective inputs and outputs of the adversary Eve. The devices held by the honest parties are separated into $m=2$ components with corresponding inputs and outputs $u'^i$ and $x^i$ respectively, for $i = 1, 2$, i.e., $u' = (u'^1, u'^2)$ and $x = (x^1, x^2)$. Here, $u'^i, x^i$ themselves denote the inputs and outputs of the $n$ runs of the protocol for party $i$, i.e., $u'^i = \textbf{u'}^i_{\leq n}$ and $x^i = \bx^i_{\leq n}$. 
Here, for the $j$-th run of the Bell test, the inputs of Alice are $\textbf{u'}^1_j$ and those of Bob are $\textbf{u'}^2_j$ with the corresponding outcomes $\textbf{x}^1_j$ and $\textbf{x}^2_j$ respectively, and the joint inputs of Alice and Bob in this run are $\textbf{u'}_j = (\textbf{u'}^1_j, \textbf{u'}^2_j)$ with corresponding joint outputs $\textbf{x}_j = (\textbf{x}^1_j, \textbf{x}^2_j)$. 
The honest parties draw bits $u$ from the SV source to input into the box, i.e., they set $u' = u$, they also draw a further $n$ bits $t$ which will be fed along with the outputs $x$ into the randomness extractor to obtain the output of the protocol $s := \text{Ext}(x,t)$. The adversary has classical information $e$ correlated to $u, t$. The box we consider for the protocol is given by the family of probability distributions $\{p(x,z,u,t,e|u',w)\}$

\section{Assumptions}
The Assumptions under which the Protocol is proven secure are formally stated in the main text. 
From Assumptions A1 and A2, as well as no-signaling  
\ben 
\label{eq:ns1-sup}
p(x|u', w) &=& p(x|u'), \nonumber \\
p(z|u',w) &=& p(z|w), \nonumber \\
p(x^i|u') &=& p(x^i|u'^i) \; \; \; i = 1, 2.
\een
and time-ordered-no-signaling assumptions,
\begin{eqnarray}
&&p(x^i_{k}|z,u'^i, w, u, t, e) = \nonumber \\
&& \; \; \;   p(x^i_{k} | z,u'^i_{\leq k},w,u, t, e) \; \; \; \forall k \in [n]
\label{eq:tons-sup}
\end{eqnarray}
we find that the distributions $\{p_w(x,z,u,t,e)\}$ satisfy (see \cite{our2}):
\ben
\label{eq:p-cond1}
&&p_w(x,u)=p(x,u) \\
\label{eq:p-cond2}
&& p_w (u,t,e)=p(u,t,e) \\
\label{eq:p-cond3}
&& \forall_w \hspace{0.1 cm} p_w(x,z|u,t,e)= p_w(x,z|u) \\
\label{eq:p-cond4}
&& \forall_w \hspace{0.1 cm} p_w(x,z|u,t,e)= p_w(x,z|u,e) \\
\label{eq:p-cond5}
&& p_w(x|z,u,t,e) = p_{z,t,e,w}(x|u) \; \text{is time-ordered no-signaling}  \nonumber \\
&&\\
\label{eq:p-cond6}
&& p_w(u|z,e) \; \text{,} \; p_w(t|z,u,e) \; \; \; \text{are SV sources} 
\een

The composable security criterion is given in terms of the distance $d_c$ defined as 
\be
\label{dist-comp-sup} 
d_c :=\sum_{s,e} \max_{w } \sum_{z} \left |p(s,z,e|w, \ACC ) - \frac{1}{|S|}p(z,e|w, \ACC )\right|.
\ee
Let us define the quantity $d'$ as
\ben 
\label{dprime}
d' &:=& \sum_{e} p(e|\ACC) \max_{w} \sum_{z,u} p(z,u|e,w,\ACC) \times \nonumber \\
&&\; \; \; \; \; \; \sum_{s} \left| p(s|z,w,u,e,\ACC) - \frac{1}{|S|} \right| 
\een
for any family of probability distributions $\{p(x,z,u,t,e|w)\}$.
Now, for each $e$, let $w_e$ and $p_{w_e}(x,z,u,t,e)$ denote the input of Eve and the corresponding probability distribution respectively that achieve the maximum $d'$ in Eq. (\ref{dprime}). By Assumption A1 and the no-signaling conditions, $p(e|w) = p(e)$ and $p(x,u|w) = p(x,u)$ so that the maximum is achieved by a distribution $q(x,z,u,t,e) = p(e) p_{w_e}(x,z,u,t|e)$. We can thus consider the quantity $d = d'$ given as
\be 
d = \sum_{z,u,e} q(z,u,e|\ACC) \sum_{s} \left| q(s|z,u,e,\ACC) - \frac{1}{|S|} \right|.
\ee
As shown in \cite{our2}, we have 
\be 
d_c \leq |S| d.
\ee
From the assumptions stated, it is seen that $q(x,u,z,t,e)$ obeys 
\ben
\label{q-cond} 
q(x,z|u,t,e) &=& q(x,z|u) \nonumber \\
q(x|z,u,t,e) &=& q_{t,e,z}(x|u) \; \text{is time-ordered no-signaling} \nonumber \\
q(u|z,e) &,& q(t|z,u,e) \; \text{obey the SV source conditions}. \nonumber \\
\een

\section{The Bell inequality} 
\label{sec: bellinequality}
The Bell inequality we consider for the task of randomness amplification is a modified version of the bipartite inequality in \cite{Aolita}. The inequality belongs to the class $(2, 9, 4)$ signifying that it involves two parties Alice and Bob, each making one of nine possible measurements and obtaining one of four possible outcomes. We label the measurement settings of Alice $\textbf{u}^1$ and those of Bob $\textbf{u}^2$ with $\textbf{u}^1, \textbf{u}^2 \in \{1, \dots, 9\}$. The corresponding outcomes of Alice are labeled $\textbf{x}^1$ and those of Bob $\textbf{x}^2$ with $\textbf{x}^1, \textbf{x}^2 \in \{1, \dots, 4\}$. Note that from the notation in the main text these inputs and outputs would correspond to a particular run of the protocol $\textbf{u}^i_j, \textbf{x}^i_j$. Acting on a box $\{P(\textbf{x} | \textbf{u})\}$ with $\textbf{x} = (\textbf{x}^1, \textbf{x}^2)$ and $\textbf{u} = (\textbf{u}^1, \textbf{u}^2)$, the Bell expression may be written as
\begin{equation}
\label{bip-Bell-ineq}
\textbf{B} \cdot \{P(\textbf{x}| \textbf{u})\} = \sum_{\textbf{x}, \textbf{u}} \textbf{B}(\textbf{x}, \textbf{u}) P(\textbf{x} | \textbf{u}) \geq 4,
\end{equation}
Here $\textbf{B}$ is an indicator vector with entries 
\begin{equation} \label{eq:bell-indicator}
\textbf{B}(\textbf{x}, \textbf{u}) = \left\{
     \begin{array}{lr}
       1 & : (\textbf{x}, \textbf{u}) \in S_B \\
       0 & : \text{otherwise} 
     \end{array}
   \right.
\end{equation} 
The minimum value achieved by local realistic theories for this combination of probabilities is $4$ while general no-signaling theories can achieve the algebraic minimum value of $0$. Crucially, there exist a quantum state and suitable measurements reaching this algebraic minimum. 

\begin{figure}
\scalebox{0.4}{\includegraphics{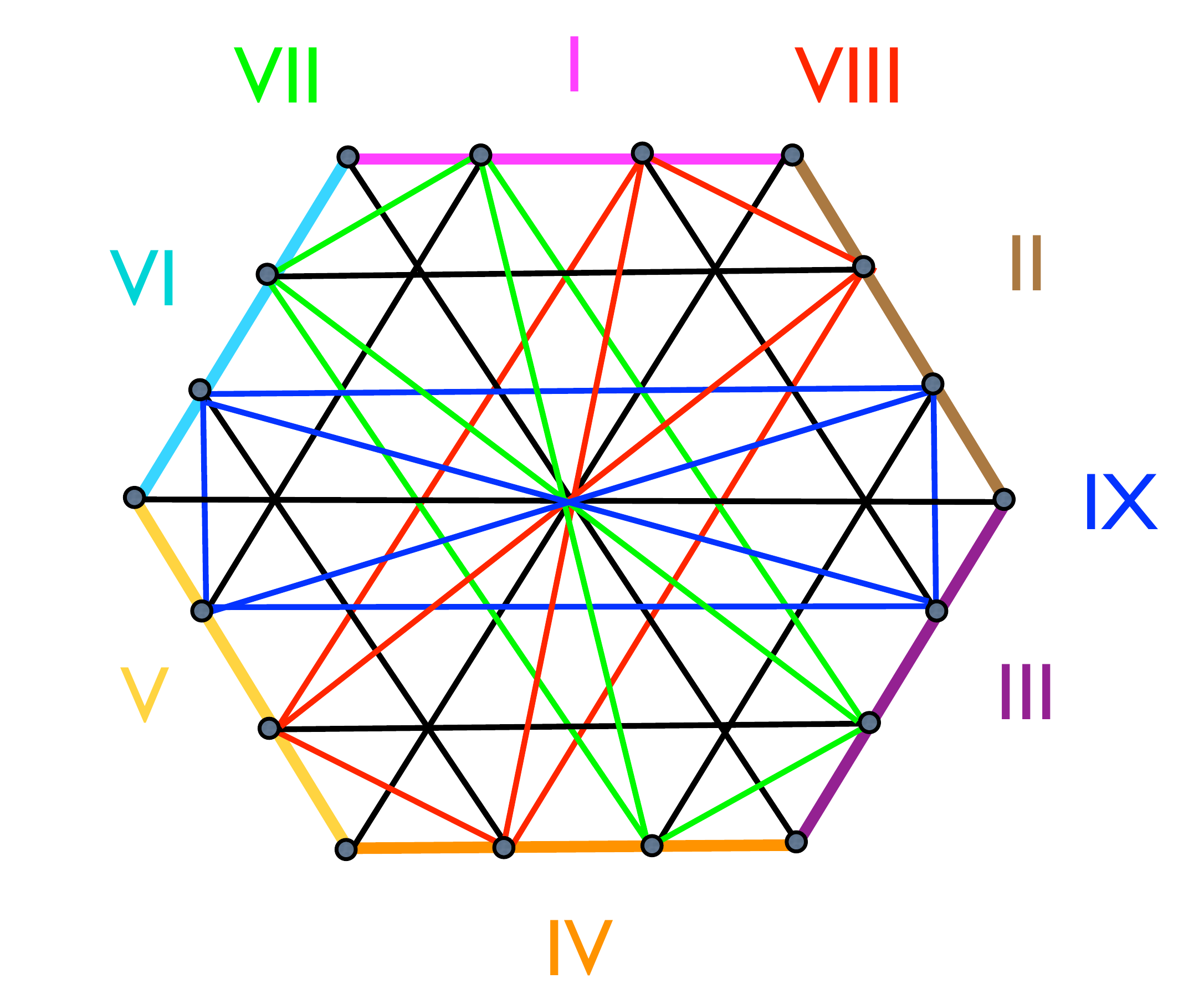}}
\caption{Illustration of the Kochen-Specker set used in formulating the bipartite Bell inequality}
\label{bip_Bell_fig}
\end{figure}

The set $S_{B} = \bigcup S_B^{\textbf{u}}$ for which $\textbf{B}(\textbf{x}, \textbf{u}) = 1$ is defined using the orthogonality hypergraph in Fig. \ref{bip_Bell_fig} which represents a Kochen-Specker set of vectors from \cite{Cabello} displaying state-independent contextuality in dimension $4$. In this graph, the nine measurements are represented by the nine colored hyperedges each giving four outcomes, where the vertices represent rank-one projectors corresponding to the outcomes. Each party performs the nine measurements corresponding to the KS set, 
the set $S_{B}$ consists of all $81$ pairs of measurements $\textbf{u}$. For each $\textbf{u}$, the pair of outcomes $\textbf{x} \in S_B^{\textbf{u}}$ if the vertex representing outcome $\textbf{x}^1$ in $\textbf{u}^1$ is connected by a hyperedge to the vertex representing outcome $\textbf{x}^2$ in $\textbf{u}^2$. 
A direct counting shows that out of the $4^2 \times 9^2 = 1296$ probabilities $P(\textbf{x} | \textbf{u})$, $504$ enter the Bell expression. Moreover, in any deterministic assignment of $1's$ and $0's$ to these probabilities respecting the no-signaling and normalization constraints, at least four probabilities are assigned value $1$ giving rise to the local realistic bound. In quantum theory and in general no-signaling theories however, all $504$ probabilities may be set to $0$ giving rise to the algebraic violation of the inequality.

In order to achieve the maximal violation within quantum theory, Alice and Bob share a maximally entangled state in dimension four, namely
\be
\label{eq:state}
| \Psi \rangle = \frac{1}{2} \sum_{i = 1}^{4} | i \rangle \otimes | i \rangle.
\ee
The measurements they each perform correspond exactly to the $18$ projectors defining the Kochen-Specker set in \cite{Cabello}. Specifically, these projectors correspond to the following vectors
\begin{widetext}
\be
\label{eq:measurements1}
\begin{tabular}{l l l l}
$| v_1\rangle = (1, 0, 0, 0)^T$ & $|v_2 \rangle = (0, 1, 0, 0)^T$ & $|v_3 \rangle = (0, 0, 1, 1)^T$ & $|v_4 \rangle = (0, 0, 1, -1)^T$  \\ $|v_5 \rangle = (1, -1, 0, 0)^T$ & $|v_6 \rangle =(1, 1, -1, -1)^T$ & $|v_7 \rangle= (1, 1, 1, 1)^T$ & $|v_8 \rangle = (1, -1, 1, -1)^T$ \\ $|v_ 9 \rangle = (1, 0, -1, 0)^T$ & $|v_{10} \rangle = (0, 1, 0, -1)^T$ & $| v_{11} \rangle = (1, 0, 1, 0)^T$ & $| v_{12} \rangle = (1, 1, -1, 1)^T$  \\ $|v_{13} \rangle = (-1, 1, 1, 1)^T$ & $|v_{14} \rangle = (1, 1, 1, -1)^T$ & $|v_{15} \rangle = (1, 0, 0, 1)^T$ & $|v_{16} \rangle = (0, 1, -1, 0)^T$  \\  $| v_{17} \rangle = (0, 1, 1, 0)^T$ & $| v_{18} \rangle = (0, 0, 0, 1)^T$
\end{tabular} 
\ee
\end{widetext}
The nine measurements are defined by the following nine bases 
\begin{widetext}
\be
\label{eq:measurements2}
\begin{tabular}{l l l}
$\textit{M}_1 = (|v_1 \rangle, |v_2 \rangle, |v_3 \rangle, |v_4 \rangle )$ & $\textit{M}_2 = (|v_4 \rangle, |v_5 \rangle, |v_6 \rangle, |v_7 \rangle ) $ & $\textit{M}_3 = (|v_7 \rangle, |v_8 \rangle, |v_9 \rangle, |v_{10} \rangle) $ \\ 
$\textit{M}_4 = ( | v_{10} \rangle, |v_{11} \rangle, |v_{12} \rangle, |v_{13} \rangle )$ & $ \textit{M}_5= (|v_{13} \rangle, |v_{14} \rangle, |v_{15} \rangle, |v_{16} \rangle )$ & $\textit{M}_6 = (|v_{16} \rangle, |v_{17} \rangle, |v_{18} \rangle, |v_{1} \rangle )$ \\ $ \textit{M}_7 = (|v_2 \rangle, |v_9 \rangle, |v_{11} \rangle, |v_{18} \rangle )$ & $\textit{M}_8 = (|v_3 \rangle, |v_5 \rangle, |v_{12} \rangle, |v_{14} \rangle )$ & 
$\textit{M}_9 = ( |v_6 \rangle, |v_8 \rangle, | v_{15} \rangle, |v_{17} \rangle )$
\end{tabular}
\ee
\end{widetext}
For this state and measurements all the probabilities entering the Bell expression are identically zero, so that algebraic violation is achieved. 

Apart from the fact that quantum mechanics violates the inequality, we would also like to ensure that a strong violation of the inequality guarantees randomness. Unfortunately, none of the bipartite Bell inequalities tested so far have this property. The above inequality though has the following redeeming feature. 
Let $\textbf{u}^* \equiv (1,2)$ be a particular pair of measurement settings and $\textbf{x}^* \equiv (1,3)$ a chosen pair of outcomes for this setting. For all no-signaling boxes which algebraically violate the inequality, it holds that 
\ben
&&0 \leq P(\textbf{x} = \textbf{x}^* | \textbf{u} = \textbf{u}^*) \leq \frac{3}{4} \nonumber \\
&& \forall \{P(\textbf{x}| \textbf{u})\} \; \; \; \text{s.t} \; \; \; \textbf{B} \cdot \{P(\textbf{x}| \textbf{u})\} = 0  
\een 
It should be noted that for the quantum box which algebraically violates the inequality defined by the above state and measurements, we have $P_q(\textbf{x} = \textbf{x}^* | \textbf{u} = \textbf{u}^*) = \frac{1}{16}$ so that upon maximal violation, we expect a fixed number of outputs $\textbf{x}^*$ for inputs $\textbf{u}^*$ in the experiment. Moreover, for boxes with a Bell value $\delta$, we will see in Lemma \ref{lem:B_SV_rand} that $0 \leq P(\textbf{x}^* | \textbf{u}^*) \leq \frac{1}{4}(3 + 2 \delta)$. So that, when one has large violation of the inequality and a sufficient number of outputs and inputs $(\textbf{x}^*, \textbf{u}^*)$, it must be the case that a sufficient number of runs in the experiment were done with boxes that yield randomness. 

\subsection{(Partial)Randomness from an observed Bell value} 
\label{subsec:SV-Bell}

Using the Azuma-Hoeffding inequality, we have that if the observed Bell value is small, then a linear fraction of the conditional boxes have a small Bell value for settings chosen with an SV source. To obtain a min-entropy source, we need to have that a linear fraction of the conditional boxes has randomness. In this section, we establish the consequence to randomness of the observed Bell value. 

Let $\textbf{U}$ denote all the settings appearing in the Bell expression. We consider first the uniform Bell value
\be
\label{uniform-Bell}
\overline{B}^U := \frac{1}{|\textsl{U}|} \textbf{B}. \{ P(\textbf{x} | \textbf{u}) \} = \frac{1}{|\textsl{U}|} \sum_{\bu,\bx}  B(\bx,\bu) P(\bx|\bu),
\ee
where $|\textsl{U}|$ denotes the cardinality of $\textbf{U}$, i.e. the total number of settings in the Bell expression ($|\textsl{U}| = 81$ for the Bell inequality we consider). If the Bell function $B(\bx,\bu)$ is properly chosen, one can prove using linear programming that if $\overline{B}^U$ is small, the probabilities of any output are bounded away from 1. However, since our inputs to each device are chosen using a SV source, we will be only able to estimate the value of the following expression
\be
\label{SV-Bell}
\overline{B}^{SV}= \sum_{\bu,\bx} \nu_{SV}(\bu)  B(\bx,\bu) P(\bx|\bu),
\ee
where $\nu_{SV}(\bu)$ is the distribution from an (unknown) SV source. Let us note that the number of bits needed by each party to choose their settings is $\left \lceil{\log{9}} \right \rceil = 4$, so that $\bu$ is chosen using $2 \left \lceil{\log{9}} \right \rceil = 8$ bits. We will show that for the Bell function, when $\overline{B}^{SV}$ is small, $\overline{B}^U$ is also small which implies randomness (for suitably chosen $\delta > 0$).

\begin{lemma}
\label{lem:B_SV_rand}
Consider a two-party no-signaling box $\{P (\textbf{x}| \textbf{u})\}$ satisfying 
\be
\overline{B}^{SV}\leq \delta,
\ee
for some constant $\delta \geq 0$, where $\overline{B}^{SV}$ is given by Eq. \eqref{SV-Bell} with $B(\bx,\bu)$ given by Eq. (\ref{eq:bell-indicator}). 
Then  for the particular measurement setting $\textbf{u}^*$ and particular output $\textbf{x}^*$, we have
\begin{equation} \label{SV-output-bound}
P( \textbf{x}= \textbf{x}^* | \textbf{u} = \textbf{u}^*) \leq   \frac14 \left(3+ \frac{2 \delta}{(\frac12 -\epsilon)^{8} }  \right).          
\end{equation}\end{lemma}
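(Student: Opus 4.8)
The plan is to bound the quantity $\overline{B}^{SV}$ from below in terms of $\overline{B}^U$ (the uniform Bell value), and then invoke the linear-programming fact (Lemma~1 in the Supplemental Material, already cited in the excerpt) that when $\overline{B}^U$ is small, $P(\textbf{x}^*|\textbf{u}^*)$ is bounded away from $1$. The key observation is that although $\nu_{SV}(\textbf{u})$ is an unknown distribution, the SV conditions in Eq.~\eqref{SVdef-sup} force a pointwise lower bound on $\nu_{SV}(\textbf{u})$: since each setting $\textbf{u}$ is specified by $8$ bits drawn from the source, and each conditional bit has probability at least $\frac12 - \epsilon$, we have $\nu_{SV}(\textbf{u}) \geq (\frac12 - \epsilon)^{8}$ for every $\textbf{u} \in \textbf{U}$. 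Since all terms $B(\textbf{x},\textbf{u}) P(\textbf{x}|\textbf{u})$ are nonnegative (the indicator $B$ is $0/1$ and $P$ is a probability), this gives
\be
\overline{B}^{SV} = \sum_{\textbf{u},\textbf{x}} \nu_{SV}(\textbf{u}) B(\textbf{x},\textbf{u}) P(\textbf{x}|\textbf{u}) \geq \Bigl(\tfrac12 - \epsilon\Bigr)^{8} \sum_{\textbf{u},\textbf{x}} B(\textbf{x},\textbf{u}) P(\textbf{x}|\textbf{u}) = |\textbf{U}| \Bigl(\tfrac12 - \epsilon\Bigr)^{8} \, \overline{B}^U.
\ee

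Combining this with the hypothesis $\overline{B}^{SV} \leq \delta$ yields $\overline{B}^U \leq \delta / \bigl(|\textbf{U}| (\frac12 - \epsilon)^{8}\bigr)$, i.e.\ the total (unnormalized) Bell expression $\textbf{B}\cdot\{P(\textbf{x}|\textbf{u})\} = |\textbf{U}|\,\overline{B}^U$ is at most $\delta/(\frac12-\epsilon)^8$. Next I would invoke the linear-programming bound specialized to this Bell inequality: one shows (this is the content of the LP verification behind Lemma~1, extended to small nonzero Bell value) that for the specific pair $(\textbf{u}^*,\textbf{x}^*) = ((1,2),(1,3))$, any no-signaling box with $\textbf{B}\cdot\{P(\textbf{x}|\textbf{u})\} = \beta$ satisfies $P(\textbf{x}^*|\textbf{u}^*) \leq \frac14(3 + 2\beta)$ — this is exactly the ``boxes with a Bell value $\delta$ have $0\le P(\textbf{x}^*|\textbf{u}^*)\le \frac14(3+2\delta)$'' statement flagged in the text just before the lemma. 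Feeding in $\beta = \delta/(\frac12-\epsilon)^8$ gives precisely the claimed bound
\be
P(\textbf{x}^*|\textbf{u}^*) \leq \frac14\left(3 + \frac{2\delta}{(\frac12-\epsilon)^8}\right).
\ee

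The main obstacle, and the part requiring real work rather than the elementary averaging above, is establishing the linear relation $P(\textbf{x}^*|\textbf{u}^*) \leq \frac14(3+2\beta)$ between the conditional probability and the Bell value $\beta$ for \emph{all} no-signaling boxes. At $\beta = 0$ this is the $\gamma = 3/4$ statement of Lemma~1, proved by linear programming over the no-signaling polytope for the $(2,9,4)$ scenario; for $\beta > 0$ one needs the sensitivity/robustness version, which follows from LP duality: the optimal dual certificate at $\beta=0$ gives a linear functional whose value degrades linearly in $\beta$, yielding the slope $\frac12$ in front of $\beta$. I would present this as a consequence of the duality argument (or cite the explicit LP computation in the Supplemental Material), since the precise constant $2$ in $\frac14(3+2\delta)$ comes out of that computation and is not something the averaging step alone determines. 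Everything else — the pointwise bound on $\nu_{SV}$, the nonnegativity of the summands, and the final substitution — is routine.
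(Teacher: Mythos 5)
Your proposal is correct and follows essentially the same route as the paper: the pointwise lower bound $\nu_{SV}(\textbf{u}) \geq (\tfrac12-\epsilon)^8$ combined with nonnegativity of $B(\textbf{x},\textbf{u})P(\textbf{x}|\textbf{u})$ gives $\textbf{B}\cdot\{P\} \leq \delta/(\tfrac12-\epsilon)^8 =: \tilde\delta$, and the paper then bounds $P(\textbf{x}^*|\textbf{u}^*) \leq \frac14(3+2\tilde\delta)$ exactly by exhibiting a feasible dual vector $\lambda$ for the LP with constraint $\textbf{B}\cdot\{P\}\leq\tilde\delta$, which is the LP-duality argument you sketch. You correctly identify which parts are routine averaging and which part (the slope $2$ in $\frac14(3+2\beta)$) must come from the explicit LP/dual-certificate computation.
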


\begin{proof}
From the definition of an $\varepsilon$-SV source we have 
\begin{equation}
\left(\frac{1}{2} - \varepsilon \right)^{8} \leq \nu_{SV}(\bu) \leq \left(\frac{1}{2} + \varepsilon \right)^{8}.
\end{equation}
so that 
\begin{equation}
\label{SV-uniform}
\frac{1}{(\frac{1}{2} + \varepsilon)^{8} |\textsl{U}|} \overline{B}^{SV} \leq \overline{B}^{U} \leq \frac{1}{(\frac{1}{2} - \varepsilon)^{8} |\textsl{U}|} \overline{B}^{SV}
\end{equation}
%
%

We can therefore work with the Bell value for uniformly chosen settings, relating it to the Bell value with SV source settings through Eq. (\ref{SV-uniform}). For $\overline{B}^{SV}\leq \delta$, Eq.(\ref{SV-uniform}) gives that $\overline{B}^{U} \leq \frac{\delta}{(\frac{1}{2} - \varepsilon)^{8} |\textsl{U}|} =: \frac{\tilde{\delta}}{|\textsl{U}|}$. 

Consider a bipartite no-signaling box $P (\textbf{x}| \textbf{u})$ satisfying 
\begin{equation} 
\label{goodindividual}
\overline{B}^U := \frac{1}{|\textsl{U}|} \textbf{B}. \{ P(\textbf{x} | \textbf{u}) \} \leq \frac{\tilde{\delta}}{|\textsl{U}|},
\end{equation}
with $\textbf{B}$ the indicator vector for the Bell expression in Eq. \eqref{bip-Bell-ineq} and $|\textsl{U}| = 81$ the number of settings in the Bell expression. 

The maximum probability for the chosen output and input 
for the given (uniform) Bell value can be computed by the following linear program
\begin{eqnarray}
\label{lin-prog1}
&&\max_{ \{ P \}}: \textit{M}_{\textbf{u}^*, \textbf{x}^*}^T \cdot \{ P(\textbf{x} | \textbf{u}) \} \nonumber \\
&&s.t. \; \; \textit{A} \cdot \{ P( \textbf{x} | \textbf{u}) \} \leq \textit{c}.
\end{eqnarray}
Here, the indicator vector $\textit{M}_{\textbf{u}^*, \textbf{x}^*}$ is a $4^2 \times 9^2$ element vector with entries 
$M_{\textbf{u}^*, \textbf{x}^*}(\textbf{x}, \textbf{u}) = \texttt{I}_{\textbf{u} = \textbf{u}^*} \texttt{I}_{\textbf{x} = \textbf{x}^*}$, 
i.e., $M_{\textbf{u}^*, \textbf{x}^*}(\textbf{x}, \textbf{u}) = 1$ for $(\textbf{x}, \textbf{u}) = (\textbf{x}^*, \textbf{u}^*)$ and $0$ otherwise. 
The constraint on the box $\{P(\textbf{x} | \textbf{u})\}$ written as a vector with $4^2 \times 9^2$ entries is given by the matrix $\textit{A}$ and the vector $\textit{c}$. These encode the no-signaling constraints between the two parties, the normalization and the positivity constraints on the probabilities $P(\textbf{x} | \textbf{u})$. In addition, $\textit{A}$ and $\textit{c}$ also encode the condition that $\textbf{B}.\{ P(\textbf{x}| \textbf{u}) \} \leq \tilde{\delta}$ for a constant $\tilde{\delta} \geq 0$. \

The solution to the primal linear program in Eq. (\ref{lin-prog1}) can be bounded by a feasible solution to the dual program which is written as
\begin{eqnarray}
\label{dual-lin-prog1}
&&\min_{ \lambda_{\textbf{u}^*, \textbf{x}^*}}: \textit{c}^T \cdot \lambda_{\textbf{u}^*, \textbf{x}^*} \nonumber \\
&& s.t. \; \; \; \textit{A}^T \cdot \lambda_{\textbf{u}^*, \textbf{x}^*} = \textit{M}_{\textbf{u}^*, \textbf{x}^*}, \nonumber \\
&&\; \; \; \; \; \; \; \;  \lambda_{\textbf{u}^*, \textbf{x}^*} \geq 0.
\end{eqnarray}
We find a feasible $\lambda_{\textbf{u}^*, \textbf{x}^*}$ satisfying the constraints to the dual program above that gives $\textit{c}^T \lambda_{\textbf{u}^*, \textbf{x}^*} \leq \frac14 (3 + 2 \tilde{\delta} )$. \footnote{The explicit vector $\lambda_{\textbf{u}^*, \textbf{x}^*}$ that is feasible for the dual program in Eq. (\ref{dual-lin-prog1}) and gives the bound can be computed by standard techniques and is available upon request.} We therefore obtain by standard duality of linear programming that
\begin{equation}
P( \textbf{x} = \textbf{x}^*| \textbf{u} = \textbf{u}^* ) \leq  \frac14 (3+ 2 \tilde{\delta}).
\end{equation}
Noting that $\tilde{\delta} =\frac{\delta}{(\frac{1}{2} - \varepsilon)^{8}}$, we obtain the required bound.
\end{proof}

\section{From empirical values to true parameters of the box } 
\label{subsec:Azuma}
In this section, we state the lemmas based on the Azuma-Hoeffding inequality and the Generalized Chernoff bound which we will use to estimate the arithmetic average of Bell values for the conditional boxes  as well as the fraction of boxes which have a lower bound. 
Let us state the following Lemma \ref{lemmaazuma} based on the Azuma-Hoeffding inequality which we will use to estimate the arithmetic average of Bell values for the conditional boxes as well as the straightforward Lemma \ref{lem:linear_fraction} whose proofs can be found in \cite{our2}.
\begin{lemma} \label{lemmaazuma} 
Consider  arbitrary random variables $W_i$ for $i=0,1,\ldots,n$, and binary random variables $B_i$ for $i=1,\ldots n$ that are functions of $W_i$, i.e. $B_i=f_i(W_i)$ 
for some functions $f_i$. Let us denote $\overline{B}_i=\mathbb{E}(B_i|W_{i-1},\ldots,W_1,W_0)$ for $i=1,\ldots,n$ and (i.e. $\overline{B}_i$ are conditional means).
Define for $k = 1, \ldots, n$, the empirical average
\be
\Lazuma_k=\frac1k\sum_{i=1}^k B_i
\ee
and the arithmetic average of conditional means 
\be
\overline{\Lazuma}_k=\frac1k\sum_{i=1}^k \overline{B}_i.
\ee
Then we have 
\be
\Pr(|\Lazuma_n-\overline{\Lazuma}_n|\geq s)\leq 2 e^{-n\frac{s^2}{2}}
\label{eq:poor_man}
\ee
\end{lemma}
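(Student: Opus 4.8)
The plan is to recognize the sequence $Z_k := k(\Lazuma_k - \overline{\Lazuma}_k) = \sum_{i=1}^k (B_i - \overline{B}_i)$ as a martingale with respect to the filtration generated by $W_0, W_1, \dots, W_n$, and then apply the standard Azuma-Hoeffding inequality to it. First I would check the martingale property: since $B_i = f_i(W_i)$ is a function of $W_i$ and $\overline{B}_i = \mathbb{E}(B_i \mid W_{i-1}, \dots, W_0)$ is by definition the conditional mean given the past, the increment $D_i := B_i - \overline{B}_i$ satisfies $\mathbb{E}(D_i \mid W_{i-1}, \dots, W_0) = 0$. Hence $\mathbb{E}(Z_k \mid W_{k-1}, \dots, W_0) = Z_{k-1}$, so $(Z_k)$ is a martingale (with $Z_0 = 0$).

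Next I would bound the increments. Because $B_i \in \{0,1\}$ and $\overline{B}_i = \mathbb{E}(B_i \mid \cdot) \in [0,1]$, the difference $D_i = B_i - \overline{B}_i$ lies in an interval of length $1$ (indeed $D_i \in [-\overline{B}_i, 1 - \overline{B}_i]$), so $|D_i| \leq 1$ and more to the point $D_i$ ranges over an interval of width at most $1$. Azuma-Hoeffding in the form for bounded-range increments then gives, for the martingale $Z_n = \sum_{i=1}^n D_i$,
\be
\Pr(|Z_n| \geq t) \leq 2 \exp\!\left(-\frac{2 t^2}{\sum_{i=1}^n c_i^2}\right)
\ee
with each range bound $c_i \leq 1$; using the cruder increment bound $|D_i|\le 1$ one gets the constant $\tfrac12$ in the exponent, which is what the statement claims. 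Setting $t = ns$ so that $|Z_n| \geq ns \iff |\Lazuma_n - \overline{\Lazuma}_n| \geq s$, and using $\sum_i c_i^2 \leq n$, yields
\be
\Pr(|\Lazuma_n - \overline{\Lazuma}_n| \geq s) \leq 2 e^{-n s^2/2},
\ee
which is Eq.~\eqref{eq:poor_man}.

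There is essentially no serious obstacle here; the only point requiring a moment's care is confirming that the conditioning structure is exactly the one Azuma-Hoeffding needs — namely that $\overline{B}_i$ is measurable with respect to the $\sigma$-algebra generated by $W_0,\dots,W_{i-1}$ (true by definition) and that $B_i$ is measurable with respect to that generated by $W_0,\dots,W_i$ (true since $B_i = f_i(W_i)$, provided one reads $W_i$ as carrying the information of $W_{i-1},\dots,W_0$, or equivalently works with the natural filtration $\mathcal{F}_i = \sigma(W_0,\dots,W_i)$). Given that, the martingale difference property and the unit bound on increments are immediate, and the result follows by quoting the Azuma-Hoeffding bound. This ``poor man's'' version deliberately does not exploit any variance information about the $B_i$, which is why only the weak $n s^2/2$ rate appears; a sharper Bernstein-type bound would be needed for the generalized Chernoff statement, but that is handled separately in Lemma~\ref{lem:linear_fraction} and the later sections.
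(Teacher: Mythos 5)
Your proof is correct and follows exactly the route the paper intends (the lemma is presented as ``based on the Azuma--Hoeffding inequality,'' with the proof deferred to the reference): identify $Z_k = \sum_{i\le k}(B_i - \overline{B}_i)$ as a martingale in the natural filtration $\mathcal{F}_i=\sigma(W_0,\dots,W_i)$, bound each increment in absolute value by $1$, and apply Azuma--Hoeffding with $t=ns$. (One tangential slip in your closing remark: the generalized Chernoff statement is Theorem~\ref{thm:chernoff}, not Lemma~\ref{lem:linear_fraction}, which is a deterministic counting argument.)
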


\begin{lemma}
\label{lem:linear_fraction}
If the arithmetic average $\overline{\Lazuma}_n$ of $n$ conditional means satisfies  $\overline{\Lazuma}_n\leq \delta$ for some parameter $\delta > 0$, then 
in at least $(1-\sqrt{\delta}) n$ of positions $i$ we have $\overline{B}_i\leq \sqrt{\delta}$
\end{lemma}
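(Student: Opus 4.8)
The plan is to prove Lemma \ref{lem:linear_fraction} by a straightforward counting (Markov-type) argument applied to the nonnegative numbers $\overline{B}_i$. Since each $B_i$ is binary, its conditional mean $\overline{B}_i = \mathbb{E}(B_i \mid W_{i-1},\ldots,W_0)$ lies in $[0,1]$, and by hypothesis their arithmetic average satisfies $\frac1n\sum_{i=1}^n \overline{B}_i = \overline{\Lazuma}_n \leq \delta$. I want to bound the number of indices $i$ for which $\overline{B}_i > \sqrt{\delta}$.

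First I would let $\text{Bad} = \{ i \in [n] : \overline{B}_i > \sqrt{\delta} \}$ and observe that
\be
n\delta \;\geq\; \sum_{i=1}^n \overline{B}_i \;\geq\; \sum_{i \in \text{Bad}} \overline{B}_i \;>\; |\text{Bad}| \cdot \sqrt{\delta},
\ee
where the first inequality is the hypothesis, the second drops the nonnegative terms with $i \notin \text{Bad}$, and the third uses the defining property of $\text{Bad}$. Dividing through by $\sqrt{\delta}$ (which is legitimate since if $\delta = 0$ the conclusion $\overline{B}_i \leq 0$ holds for all $i$ trivially, as each $\overline{B}_i \geq 0$ and they sum to at most $0$) gives $|\text{Bad}| < \sqrt{\delta}\, n$. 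Hence the complementary set $\{ i : \overline{B}_i \leq \sqrt{\delta} \}$ has size at least $n - \sqrt{\delta}\, n = (1 - \sqrt{\delta}) n$, which is exactly the claim.

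There is no real obstacle here — the statement is an elementary averaging argument and the only mild subtlety is handling the boundary case $\delta = 0$ (and noting that the conclusion is vacuous unless $\delta < 1$, since otherwise $(1-\sqrt{\delta})n \leq 0$). Everything reduces to the nonnegativity of the conditional means and the given bound on their average; no probabilistic machinery beyond the definition of conditional expectation is needed, which is why the lemma is described as ``straightforward'' in the text and its proof deferred to \cite{our2}.
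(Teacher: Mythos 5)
Your proof is correct and is exactly the elementary Markov-type counting argument that the paper (via the reference to \cite{our2}) relies on: nonnegativity of the conditional means $\overline{B}_i \in [0,1]$ plus the bound on their average immediately bounds the number of indices with $\overline{B}_i > \sqrt{\delta}$. The edge-case remarks about $\delta = 0$ and $\delta \geq 1$ are harmless but unnecessary given the lemma's hypothesis $\delta > 0$ and the fact that the conclusion is read as a lower bound (possibly vacuous).
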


\subsection{Proving the lower bound for a fraction of boxes}
In this section, we estimate the fraction of boxes for which $q(\textbf{x}_i = \textbf{x}^*|\textbf{u}_i = \textbf{u}^*, \textbf{u}_{< i}, \textbf{x}_{< i}, z, e)$ is lower bounded by a constant.
To do so, we perform a test using the random variables $D^{u}_i(x)$ for any fixed $u$ 
\begin{displaymath}
   D^{u}_i(x) \defeq D(\textbf{x}_i, \textbf{u}_i) =  \left\{
     \begin{array}{lr}
       1 & : \textbf{x}_i = \textbf{x}^* \wedge \textbf{u}_i = \textbf{u}^* \\
       0 & : \text{otherwise} 
     \end{array}
   \right.
\end{displaymath} 
for $i=1,\ldots,n$.
The test function is defined as
\begin{equation}
\textit{S}_n(x,u) \defeq \frac{1}{n} \sum_{i=1}^{n} D(\textbf{x}_i, \textbf{u}_i) 
\end{equation}
with the corresponding average $\overline{\textit{S}}_n(x,u,z,e)$ defined as
\begin{equation}
\overline{\textit{S}}_n(x,u,z,e) \defeq \frac{1}{n} \sum_{i=1}^{n}  \mathbb{E}_{\sim q(\textbf{x}_i | \textbf{x}_{< i},  u, z, e)} D(\textbf{x}_i, \textbf{u}_i).
\end{equation}
The test checks if 
\begin{equation}
\label{eq:rand-test}
\textit{S}_n(x,u) \geq \mu_1
\end{equation} 
for a fixed $\mu_1 > 0$. 

We now show that when the test accepts, with probability $1 - 2 \exp \left( - n \frac{\mu_1^2}{8}  \right)$ at least $\frac{\mu_1 - 2\kappa}{2(1 - \kappa)} n$ boxes have randomness in the output for input setting $\textbf{u}^*$, specifically that $q(\textbf{x}_i = \textbf{x}^* | \textbf{u}_i = \textbf{u}^*, \textbf{u}_{<i}, \textbf{x}_{<i}, z, e) \geq \kappa$ for fixed $\kappa > 0$. 

\begin{lemma}
\label{lem:Azuma-count}
Assume that the test given by Eq. (\ref{eq:rand-test}) for the box $q(\textbf{x}_1, \ldots, \textbf{x}_n | \textbf{u}_1, \ldots, \textbf{u}_n, z, e)$ accepts (for fixed $\mu_1 > 0$). Consider the set $I_{\kappa}(u) \defeq \left\{ i : \textbf{u}_i = \textbf{u}^* \wedge q(\textbf{x}_i = \textbf{x}^* | \textbf{u}_i = \textbf{u}^*, \textbf{u}_{< i}, \textbf{x}_{<i}, z, e) \geq \kappa \right\}$. 
With probability at least $1 - 2 \exp \left( - n \frac{\mu_1^2}{8}  \right)$, $|I_{\kappa}(u)| \geq \frac{\mu_1 - 2\kappa}{2(1 - \kappa)} n$. 
\end{lemma}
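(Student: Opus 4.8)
The plan is to apply the Azuma--Hoeffding inequality (\lemref{maazuma}) to the random variables $D^u_i(x) = D(\textbf{x}_i,\textbf{u}_i)$, using the time-ordered no-signaling structure of $q$ to guarantee that the conditional means are well-defined. Concretely, set $W_i = (\textbf{x}_{\le i}, \textbf{u}, z, e)$ (with $u$ fixed, so $\textbf{u}_i$ is a deterministic function of $W_i$) and $B_i = D^u_i(x)$, so that $\overline{B}_i = \mathbb{E}_{\sim q(\textbf{x}_i|\textbf{x}_{<i},u,z,e)} D(\textbf{x}_i,\textbf{u}_i) = \texttt{I}_{\textbf{u}_i = \textbf{u}^*}\, q(\textbf{x}_i = \textbf{x}^* \mid \textbf{u}_i = \textbf{u}^*, \textbf{u}_{<i}, \textbf{x}_{<i}, z, e)$. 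Then $\overline{S}_n(x,u,z,e) = \frac1n\sum_i \overline{B}_i$ and \lemref{maazuma} gives
\be
\Pr\!\left( |S_n(x,u) - \overline{S}_n(x,u,z,e)| \ge \tfrac{\mu_1}{2} \right) \le 2\exp\!\left(-n\tfrac{\mu_1^2}{8}\right).
\ee
So, conditioned on the test accepting (i.e.\ $S_n(x,u) \ge \mu_1$), with probability at least $1 - 2\exp(-n\mu_1^2/8)$ we also have $\overline{S}_n(x,u,z,e) \ge \mu_1 - \mu_1/2 = \mu_1/2$.

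Next I would turn the bound on the average of conditional means into a counting statement about $|I_\kappa(u)|$ by a simple partitioning argument. Split the index set $[n]$ into three classes: indices $i \notin J(u) := \{i : \textbf{u}_i = \textbf{u}^*\}$ (which contribute $\overline{B}_i = 0$), indices in $I_\kappa(u) \subseteq J(u)$ (each contributing $\overline{B}_i \le 1$), and indices in $J(u)\setminus I_\kappa(u)$ (each contributing $\overline{B}_i < \kappa$ by definition of $I_\kappa(u)$). Hence
\be
\tfrac{\mu_1}{2} \le \overline{S}_n = \tfrac1n\sum_{i=1}^n \overline{B}_i \le \tfrac1n\Big(|I_\kappa(u)| \cdot 1 + (n - |I_\kappa(u)|)\cdot \kappa\Big) = \kappa + \tfrac{|I_\kappa(u)|}{n}(1-\kappa),
\ee
and solving for $|I_\kappa(u)|$ yields $|I_\kappa(u)| \ge \frac{\mu_1/2 - \kappa}{1-\kappa}\, n = \frac{\mu_1 - 2\kappa}{2(1-\kappa)}\, n$, as claimed. (Here one uses $0 < \kappa < 1$ and implicitly $\mu_1 > 2\kappa$ so the bound is non-trivial; in the regime $\mu_1 \le 2\kappa$ the statement is vacuous since the right-hand side is non-positive.)

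The only real subtlety — and the step I would be most careful with — is checking that \lemref{maazuma} genuinely applies, i.e.\ that $B_i = f_i(W_i)$ for functions $f_i$ and that the conditional-mean interpretation of $\overline{B}_i$ is the right one. This is where the time-ordered no-signaling property \eqref{q-cond} enters: it ensures that $q(\textbf{x}_i \mid \textbf{x}_{<i}, u, z, e)$ does not depend on the future settings $\textbf{u}_{>i}$, so that $\overline{B}_i$ is a legitimate function of the past $W_{i-1} = (\textbf{x}_{<i}, u, z, e)$ together with $\textbf{u}_i$ — matching the hypothesis $\overline{B}_i = \mathbb{E}(B_i \mid W_{i-1},\ldots,W_0)$ of the lemma. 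Everything else is the routine Azuma concentration plus the elementary averaging inequality above.
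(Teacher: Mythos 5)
Your proposal is correct and follows the same route as the paper: apply Lemma~\ref{lemmaazuma} with deviation $s=\mu_1/2$ to conclude $\overline{S}_n(x,u,z,e)\geq\mu_1/2$ with probability $\geq 1-2\exp(-n\mu_1^2/8)$, then convert the bound on the sum of conditional means into a lower bound on $|I_\kappa(u)|$ by the same averaging inequality $(n-|I_\kappa(u)|)\kappa + |I_\kappa(u)|\geq \tfrac{\mu_1}{2}n$. Your version is slightly more explicit (spelling out the three-way index partition, verifying that $\overline{B}_i$ depends only on the past via the time-ordered no-signaling condition, and noting the implicit requirement $\mu_1>2\kappa$), but it is the same proof.
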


\begin{proof}
When the test is passed, i.e., when $\textit{S}_n(x,u) \geq \mu_1$, by Lemma \ref{lemmaazuma} with probability at least $1 - 2 \exp \left( -n \frac{\mu_1^2}{8}  \right)$,
we have that $\overline{\textit{S}}_n(x,u,z,e) \geq \frac{\mu_1}{2}$. In other words, we have
\ben
\sum_{i} q(\textbf{x}_i = \textbf{x}^* | \textbf{u}_i = \textbf{u}^*, \textbf{u}_{< i}, \textbf{x}_{<i}, z, e) &\geq & \frac{\mu_1}{2},
\een
where we used the no-signaling condition $q(\textbf{x}_i = \textbf{x}^* | u, z, e) = q(\textbf{x}_i = \textbf{x}^* |  \textbf{u}_i = \textbf{u}^*, \textbf{u}_{< i}, \textbf{x}_{<i}, z, e)$. 
Consider the set $I_{\kappa}(u)$, we have that
\begin{equation}
(n - |I_{\kappa}(u)|) \kappa + |I_{\kappa}(u)| \geq \frac{\mu_1}{2} \; n
\end{equation}
or 
\begin{equation}
|I_{\kappa}(u)| \geq \frac{\mu_1 - 2\kappa}{2(1 - \kappa)} n.
\end{equation}
Therefore, with probability at least $1 - 2 \exp \left( - n \frac{\mu_1^2}{8}  \right)$ the set of boxes with $\textbf{u}_i = \textbf{u}^*$ and $q(\textbf{x}_i = \textbf{x}^* | \textbf{u}_i = \textbf{u}^*, \textbf{u}_{< i}, \textbf{x}_{<i}, z, e) \geq \kappa$ for fixed $\mu_1 > 0$, $0 < \kappa < \frac{1}{2}$ is of size at least $\frac{\mu_1 - 2\kappa}{2(1 - \kappa)} n$. \\
\end{proof}

\subsection{A min-entropy source from randomness of conditional boxes}
\label{subsec:min-entropy-source}

In this section we show that if a device is such that a linear number of conditional boxes have randomness (in the weak sense that the probability of the outputs is bounded away from one for any one setting and this particular setting appears a linear fraction of times), then the distribution on outputs constitutes a min-entropy source. 
Let any sequence $(\xuseq)$ be such that $\bx_i$ and $\bu_i$, $i\in \{1,\ldots,n\}$, are of the form of $\bx=(\bx^1,\bx^2)$ and $\bu=(\bu^1,\bu^2)$, respectively. Consider that with large probability over sequences $(\xuseq)$, a particular setting $\bu^*$ appears a linear fraction $\mu n$ times and that within this fraction, the probability of $\bx^*$ and its complementary outcome $\bar{\bx}^*$ is bounded away from $1$, then the total probability distribution is close in variational distance to a min-entropy source. 
To show this, we use the following lemma from \cite{our2}  
\begin{lemma}
\label{lem:min-entropy-0}
Fix any measure $P$ on the space of sequences $(\xuseq)$. 
Suppose that for a sequence $(\xuseq)$, there exists $\texttt{K} \subseteq [n]$ of size larger than $\mu n$, such that for all $l \in \texttt{K}$ we have $\bu_l = \bu^*$ and the conditional boxes $P_{\bx_{< l}, \bu_{< l}} (\bx_l | \bu_l, z, e) $ satisfy 
\be
P_{\bx_{< l}, \bu_{< l}} (\bx_l | \bu_l = \bu^*, z, e) \leq \gamma 
\label{eq:random-box2}.
\ee
Then, 
$\pxutot$ satisfies 
\begin{equation}
\pxutot\leq \gamma^{\mu n}.
\label{eq:hmin_cond}
\end{equation}
\end{lemma}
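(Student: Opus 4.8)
The plan is to prove the bound by a telescoping (chain-rule) decomposition of the joint probability and then use the hypothesis to kill a $\gamma$-factor for each index in $\texttt{K}$. First I would write, for the fixed measure $P$ and the fixed sequence $(\xuseq)$,
\[
\pxutot = \prod_{l=1}^{n} P_{\bx_{<l},\bu_{<l}}(\bx_l \mid \bu_{\leq l}, \bx_{<l}, z, e),
\]
which is just the chain rule for conditional probabilities (the precise conditioning on $\bu_{\leq l}$ versus $\bu_l$ will be handled by the no-signaling structure, see below). Each factor is a probability and hence lies in $[0,1]$, so I can immediately bound the product from above by the sub-product over the indices $l \in \texttt{K}$:
\[
\pxutot \leq \prod_{l \in \texttt{K}} P_{\bx_{<l},\bu_{<l}}(\bx_l \mid \bu_{\leq l}, \bx_{<l}, z, e).
\]

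Next I would invoke the time-ordered no-signaling condition (Eq.~\eqref{eq:tons-sup}, inherited by $q$ and more generally by any $P$ of the form considered, cf.\ Eq.~\eqref{q-cond}) to identify $P_{\bx_{<l},\bu_{<l}}(\bx_l \mid \bu_{\leq l}, \bx_{<l}, z, e)$ with the conditional-box quantity $P_{\bx_{<l},\bu_{<l}}(\bx_l \mid \bu_l, z, e)$ that appears in the statement; the point is that conditioning on future inputs does not change the distribution of $\bx_l$, and for each $l \in \texttt{K}$ we have $\bu_l = \bu^*$ by assumption. Then the hypothesis \eqref{eq:random-box2} gives $P_{\bx_{<l},\bu_{<l}}(\bx_l \mid \bu_l = \bu^*, z, e) \leq \gamma$ for every $l \in \texttt{K}$. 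Since $|\texttt{K}| > \mu n$ and each factor is at most $\gamma < 1$, the product over $\texttt{K}$ is at most $\gamma^{|\texttt{K}|} \leq \gamma^{\mu n}$, which is exactly \eqref{eq:hmin_cond}.

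I expect the only real subtlety — and hence the main obstacle — to be the bookkeeping in the second step: making rigorous the claim that the relevant conditional of the joint distribution coincides with the "conditional box" object $P_{\bx_{<l},\bu_{<l}}(\bx_l\mid\bu_l,z,e)$, and in particular that the $z,e$ can be carried through the chain rule consistently. Concretely one must check that, after conditioning on $z$ and $e$ (which are classical and, by the assumptions, satisfy the no-signaling/SV structure recalled in \eqref{q-cond}), the marginal on $(\bx,\bu)$ still obeys the tons property run-by-run so that past inputs/outputs are the only relevant conditioning and future inputs drop out. Everything else — the chain rule, bounding extra factors by $1$, and the final $\gamma^{\mu n}$ estimate — is routine. (Strictly, if one wanted an upper bound on $P(\bx_1,\dots,\bx_n\mid\bu_1,\dots,\bu_n,z,e)$ as written, one can also simply note $P(\bx_{\leq n}\mid\bu_{\leq n},z,e)\le P(\bx_{\texttt{K}}\mid\bu_{\leq n},z,e)$ and factor the latter; the tons property is what lets the factorization be written purely in terms of the conditional boxes.)
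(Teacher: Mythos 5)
Your proof is correct and is the natural argument: chain-rule decomposition of $\pxutot$, identification of each factor with the conditional box $P_{\bx_{<l},\bu_{<l}}(\bx_l|\bu_l,z,e)$ via the time-ordered no-signaling property (which is what lets you drop the conditioning on the future inputs in $\bu_{\leq n}$ and land on the quantity the hypothesis bounds), dropping the factors with $l\notin\texttt{K}$ (each $\le 1$), and bounding each of the $|\texttt{K}|>\mu n$ remaining factors by $\gamma$. The supplement itself does not reproduce a proof of this lemma (it imports it from an earlier paper and applies it to the measure $q$, which satisfies the \texttt{tons} structure of Eq.~(\ref{q-cond})), but what you wrote is precisely the right content, and you correctly flag the one subtlety — that the stated hypothesis ``any measure $P$'' implicitly presupposes the time-ordered no-signaling structure, without which the chain-rule factor $P(\bx_l\mid\bx_{<l},\bu_{\leq n},z,e)$ need not equal the conditional box.
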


\section{Security Proof}
Let us first recall the definition of a min-entropy source and the notion of an independent source randomness extractor, specifying the extractor we will use to obtain randomness in our protocol. 
The min-entropy of a random variable $S$ is given by 
\begin{equation}
H_{\text{min}}(S) = \min_{s \in \text{supp}(S)} \log \frac{1}{P(S = s)},
\end{equation}
where $\text{supp}(S)$ denotes the support of $S$. For $S \in \{0,1\}^n$, the source is called an $(n, H_{\text{min}}(S))$ min-entropy source. An independent source extractor $\text{Ext}: (\{0,1\}^n)^k \rightarrow \{0,1\}^m$ is a function that acts on $k$ independent min-entropy sources and outputs $m$ bits that are $\xi$ close to uniform, i.e., for $k$ independent $(n, H_{\text{min}}(S_i))$ sources (with $i \in \{1,\dots, k\}$) we have
\begin{equation}
\Vert \text{Ext}(S_1, \ldots, S_k) - U_m \Vert_1 \leq \xi,
\end{equation}
where $\| . \|_1$ is the variational distance between the two distributions and $U_m$ denotes the uniform distribution on the $m$ bits. For use in Protocol I, we use a (non-explicit) deterministic extractor from \cite{CG} that, given two independent sources of min-entropy larger than $h$, outputs $\Omega(h)$ bits $2^{-\Omega(h)}$-close to uniform. Alternatively, in the protocol, one might also use the explicit extractor from \cite{one_bit_extr} that, given two independent sources of min-entropy at least $\log^{C}(h)$ for large enough constant $C$ outputs $1$ bit with error $h^{-\Omega(1)}$.

Let us define the set $Az^{\delta_{Az}}_{1}$ as
\begin{eqnarray}
\label{eq:Azuma1} 
Az^{\delta_{Az}}_{1} && \defeq \{ (z, u, e) : \nonumber \\
&& \Pr_{\sim q(x|z,u,e)} \left( \bar{\Lazuma}_n(x,u,z,e) \geq \Lazuma_n(x,u) + \delta_{Az} \right) \leq \epsilon_{Az1} \} \nonumber \\
\end{eqnarray}
and the cut 
\begin{eqnarray}
Az^{\delta_{Az}}_{1}(u) & \defeq & \{ (z, e) : (z,u,e) \in Az^{\delta_{Az}}_{1} \}. \nonumber \\
\end{eqnarray}
Let us also define the set $Az^{\mu_1}_{2}(u)$ for any fixed $u$ as
\ben 
\label{eq:Azuma2}
Az^{\mu_1}_{2}(u) && \defeq \{ (z,e) : \nonumber \\
&& \Pr_{\sim q(x|z,u,e)} \left( \overline{\textit{S}}_n(x,u,z,e) \leq \textit{S}_n(x,u) - \frac{\mu_1}{2} \right) \leq \epsilon_{Az2} \} \nonumber \\
\een
with $\epsilon_{Az1} = 2 e^{- n \frac{1}{4} \delta_{Az}^2}$ and $\epsilon_{Az2} = 2  e^{ - n \frac{\mu_1^2}{16} }$ and the set $Az(u)$ as
\be 
Az(u) \defeq Az^{\delta_{Az}}_{1}(u) \cap Az^{\mu_1}_{2}(u).
\ee
Note that despite the apparent similarity in the nomenclature of $Az^{\delta_{Az}}_{1}(u)$ and $Az^{\mu_1}_{2}(u)$, they differ in the respect that $Az^{\mu_1}_{2}(u)$ is a set of large measure for every $u$ (as seen in Eq. (\ref{eq:Azuma-measure})) while $Az^{\delta_{Az}}_{1}(u)$ is a set of large measure only for most (typical) $u$. 
Here 
\ben 
\Lazuma_n(x,u) &=& \frac{1}{n} \sum_{i=1}^{n} B(\bx_i, \bu_i), \nonumber \\
\bar{\Lazuma}_n(x,u,z,e) &=& \frac{1}{n} \sum_{i=1}^{n} \mathbb{E}_{q(\bx_i, \bu_i | \bx_{< i}, \bu_{< i}, z, e)} B(\bx_i, \bu_i).
\een 
Similarly, 
\ben 
\textit{S}_n(x,u) &=& \frac{1}{n} \sum_{i=1}^{n} D(\bx_i, \bu_i), \nonumber \\
\overline{\textit{S}}_n(x,u,z,e) &=& \frac{1}{n} \sum_{i=1}^{n} \mathbb{E}_{q(\bx_i, | \bx_{< i}, u, z, e)} D(\bx_i, \bu_i).
\een 
Applying Lemma \ref{lemmaazuma}, taking $W_0=(z,e)$, $W_i=(\textbf{x}_i, \textbf{u}_i)$ for $i=1,\ldots, n$, we obtain by a direct application of the Markov inequality that
\ben 
\label{eq:Azuma-measure}
&&\sum_{(z,u,e) \in Az^{\delta_{Az}}_{1}} q(z,u,e) \geq 1 - \epsilon_{Az1} \nonumber \\
&&\sum_{(z,e) \in Az^{\mu_1}_{2}(u)} q(z,e|u) \geq 1 - \epsilon_{Az2}.
\een 
To elaborate, we get from Lemma \ref{lemmaazuma} that
\ben
&&\Pr_{(x,u,z,e) \sim q(x,u,z,e)} \left(\bar{\Lazuma}_n(x,u,z,e) \geq \Lazuma_n(x,u) + \delta_{Az} \right) \leq  \epsilon_{Az1}^2 \nonumber \\
&&\Pr_{(z,u,e) \sim q(z,u,e)} [\Pr_{x \sim q(x|z,u,e)} \left(\bar{\Lazuma}_n(x,u,z,e) \geq \Lazuma_n(x,u) + \delta_{Az} \right) \nonumber \\ 
&& \qquad \qquad \qquad \qquad \qquad \qquad \qquad \geq  \epsilon_{Az1} ] \leq \epsilon_{Az1}
\een
and the second inequality in Eq.(\ref{eq:Azuma-measure}) is obtained similarly. 
  
Also, as stated previously we define the sets $\ACC_1$ and $\ACC_2$ as the sets of $(x,u)$ for which the tests in the protocol are passed, i.e., 
\ben 
\ACC_1 &\defeq & \{(x,u) : \Lazuma_n(x,u) \leq \delta\} \nonumber \\
\ACC_2 &\defeq & \left\lbrace (x,u) : \textit{S}_n(x,u) \geq \mu_1 \right\rbrace,
\een
and the set $\ACC = \ACC_1 \cap \ACC_2$ of $(x,u)$ for which both tests in the protocol are passed. Let us also define 
\be 
\ACC_u \defeq \{ x : (x,u) \in \ACC \}.
\ee

We are now ready to formulate the following lemma.
\begin{lemma}
\label{lem:min-ent}
Consider the measure $q(x,z,u,t,e)$ satisfying Eq.(\ref{q-cond}). For constant $\delta_1 > 0$, we have that
\ben 
&&\Pr_{\sim q(z,u,e|\ACC)} \left( \max_{x} q(x|z,u,e,\ACC) \leq \sqrt{\frac{\delta_1}{q(\ACC)}} \right)  \nonumber \\
&& \; \; \; \; \; \geq 1 - \sqrt{\frac{\delta_1}{q(\ACC)}}.
\een
\end{lemma}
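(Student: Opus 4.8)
The plan is to peel off the conditioning on $\ACC$ and then compose the four preceding lemmas. Since whether $(x,u)\in\ACC$ is a deterministic function of $(x,u)$, conditioning on $\ACC$ at a fixed $(z,u,e)$ merely restricts $x$ to the cut $\ACC_u$ and renormalises, so that
\[
\max_x q(x|z,u,e,\ACC)=\frac{\max_{x\in\ACC_u}q(x|z,u,e)}{q(\ACC_u|z,u,e)},\qquad q(z,u,e|\ACC)=\frac{q(z,u,e)\,q(\ACC_u|z,u,e)}{q(\ACC)}.
\]
I would show the numerator is exponentially small once $(z,u,e)$ is ``typical'', and deal with the possibly small denominator by discarding triples with $q(\ACC_u|z,u,e)<\tau$ for a threshold $\tau\approx\sqrt{\delta_1 q(\ACC)}$. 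Concretely, define the good set $G$ of triples $(z,u,e)$ by the three requirements $(z,u,e)\in Az^{\delta_{Az}}_{1}$, $(z,e)\in Az^{\mu_1}_{2}(u)$, and $q(\ACC_u|z,u,e)\ge\tau$. It then suffices to prove (a) $q\bigl(G^c\,\big|\,\ACC\bigr)\le\sqrt{\delta_1/q(\ACC)}$, and (b) $\max_x q(x|z,u,e,\ACC)\le\sqrt{\delta_1/q(\ACC)}$ for every $(z,u,e)\in G$.

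For (b), fix $(z,u,e)\in G$ and $x\in\ACC_u$ (so $\Lazuma_n(x,u)\le\delta$ and $\textit{S}_n(x,u)\ge\mu_1$), and split on whether $x$ is ``Azuma-typical''. The atypical set $\{x:\bar{\Lazuma}_n(x,u,z,e)\ge\Lazuma_n(x,u)+\delta_{Az}\}\cup\{x:\overline{\textit{S}}_n(x,u,z,e)\le\textit{S}_n(x,u)-\tfrac{\mu_1}{2}\}$ has total $q(\cdot|z,u,e)$-mass at most $\epsilon_{Az1}+\epsilon_{Az2}$, directly from the definitions of $Az^{\delta_{Az}}_{1}$ and $Az^{\mu_1}_{2}(u)$, so any such $x$ has $q(x|z,u,e)\le\epsilon_{Az1}+\epsilon_{Az2}$. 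If $x$ is typical, then first $\bar{\Lazuma}_n(x,u,z,e)\le\delta+\delta_{Az}=:\delta'$, so Lemma~\ref{lem:linear_fraction} gives at least $(1-\sqrt{\delta'})n$ runs $i$ with conditional Bell mean $\overline{B}_i\le\sqrt{\delta'}$, and for each such run Lemma~\ref{lem:B_SV_rand} applied to the conditional box $q(\textbf{x}_i|\textbf{u}_i,\textbf{x}_{<i},\textbf{u}_{<i},z,e)$ — which by \eqref{q-cond} is a two-party no-signaling box fed SV-distributed settings — yields $q(\textbf{x}_i=\textbf{x}^*|\textbf{u}_i=\textbf{u}^*,\textbf{x}_{<i},\textbf{u}_{<i},z,e)\le\gamma$, $\gamma=\tfrac14\bigl(3+2\sqrt{\delta'}/(\tfrac12-\varepsilon)^{8}\bigr)<1$ for $\delta'$ small; second, $\overline{\textit{S}}_n(x,u,z,e)\ge\tfrac{\mu_1}{2}$, so the counting argument inside Lemma~\ref{lem:Azuma-count} gives at least $\tfrac{\mu_1-2\kappa}{2(1-\kappa)}n$ runs $i$ with $\textbf{u}_i=\textbf{u}^*$ and $q(\textbf{x}_i=\textbf{x}^*|\textbf{u}_i=\textbf{u}^*,\ldots)\ge\kappa$. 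Intersecting the two families leaves a set $K$, $|K|\ge\mu n$ with $\mu:=\tfrac{\mu_1-2\kappa}{2(1-\kappa)}-\sqrt{\delta'}>0$, on which $\textbf{u}_i=\textbf{u}^*$ and $\kappa\le q(\textbf{x}^*|\textbf{u}^*,\ldots)\le\gamma$; since the probabilities of a box sum to $1$, this forces the probability of the \emph{actually occurring} outcome $\textbf{x}_i$ to be at most $\gamma'':=\max(\gamma,1-\kappa)<1$ for every $i\in K$. (This is exactly the role of the new tomographic test: the Bell test alone only upper bounds $q(\textbf{x}^*|\textbf{u}^*)$ and gives no entropy, while the lower bound $\kappa$ caps the remaining outcomes.) Lemma~\ref{lem:min-entropy-0}, with its ``$\gamma$'' taken to be $\gamma''$ and its linear fraction $\mu$, then gives $q(x|u,z,e)\le(\gamma'')^{\mu n}$. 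Combining both cases, $\max_{x\in\ACC_u}q(x|z,u,e)\le\max\bigl(\epsilon_{Az1}+\epsilon_{Az2},(\gamma'')^{\mu n}\bigr)$, and dividing by $q(\ACC_u|z,u,e)\ge\tau$ gives (b) once $\tau$ is chosen suitably and $n$ is large.

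For (a), note $G^c\subseteq(Az^{\delta_{Az}}_{1})^c\cup\{(z,u,e):(z,e)\notin Az^{\mu_1}_{2}(u)\}\cup\{(z,u,e):q(\ACC_u|z,u,e)<\tau\}$. By \eqref{eq:Azuma-measure} the first set has $q$-measure at most $\epsilon_{Az1}$, the second has $q$-measure $\sum_u q(u)\,q\bigl((Az^{\mu_1}_{2}(u))^c\,\big|\,u\bigr)\le\epsilon_{Az2}$, and the third intersected with $\ACC$ contributes $\sum_{(z,u,e):\,q(\ACC_u|\cdot)<\tau}q(z,u,e)q(\ACC_u|\cdot)<\tau$; hence $q(G^c\cap\ACC)\le\epsilon_{Az1}+\epsilon_{Az2}+\tau$ and $q(G^c|\ACC)\le(\epsilon_{Az1}+\epsilon_{Az2}+\tau)/q(\ACC)$. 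Taking $\tau=\sqrt{\delta_1 q(\ACC)}-\epsilon_{Az1}-\epsilon_{Az2}$ makes this exactly $\sqrt{\delta_1/q(\ACC)}$, which is (a). If $q(\ACC)<\delta_1$ this $\tau$ is not positive, but the lemma is then trivial since its right-hand side $1-\sqrt{\delta_1/q(\ACC)}$ is negative; and when $q(\ACC)\ge\delta_1$ one has $\tau\ge\delta_1/2$ for large $n$, so in (b) the ratio $\max(\epsilon_{Az1}+\epsilon_{Az2},(\gamma'')^{\mu n})/\tau$ is exponentially small, in particular below the constant $\sqrt{\delta_1/q(\ACC)}$.

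The main obstacle I anticipate is not any individual step but the simultaneous choice of constants: $\delta,\delta_{Az},\mu_1,\kappa$ must be fixed together so that both $\gamma=\tfrac14(3+2\sqrt{\delta+\delta_{Az}}/(\tfrac12-\varepsilon)^{8})<1$ and $\mu=\tfrac{\mu_1-2\kappa}{2(1-\kappa)}-\sqrt{\delta+\delta_{Az}}>0$ hold — this is precisely where the tolerated Bell-noise $\delta>0$ is forced to shrink as $\varepsilon\to\tfrac12$ — and one must then verify that the two exponentially small errors $\epsilon_{Az1}+\epsilon_{Az2}$ and $(\gamma'')^{\mu n}$, divided by $\tau$, still fall below $\sqrt{\delta_1/q(\ACC)}$; the only subtle bookkeeping point is that conditioning on $\ACC$ can blow up a probability by the factor $1/q(\ACC_u|z,u,e)$, which is why the threshold $\tau$ must be built into $G$ rather than argued away afterwards.
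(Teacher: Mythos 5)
Your proposal is correct, and its analytic core is the same as the paper's: you split on the Azuma-typical sets $Az^{\delta_{Az}}_{1}$, $Az^{\mu_1}_{2}(u)$, bound atypical $x$ by $\epsilon_{Az1}+\epsilon_{Az2}$, and on typical $x\in\ACC_u$ combine Lemma~\ref{lem:linear_fraction}, Lemma~\ref{lem:B_SV_rand}, the counting step of Lemma~\ref{lem:Azuma-count} and Lemma~\ref{lem:min-entropy-0} to get the $(\gamma'')^{\mu n}$ bound, including the key observation that the lower bound $\kappa$ caps the non-$\textbf{x}^*$ outcomes by $1-\kappa$; your constants $\gamma''$ and $\mu$ coincide with the paper's $\gamma$ and $\mu_4$ (with $\delta_{Az}=\delta$). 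Where you genuinely diverge is the final probabilistic step. The paper never introduces your threshold $\tau$ on $q(\ACC_u|z,u,e)$: it bounds the \emph{conditional expectation} $\sum_{z,u,e} q(z,u,e|\ACC)\max_x q(x|z,u,e,\ACC)$, where the renormalising factor $q(\ACC|z,u,e)$ in the denominator of $\max_x q(x|z,u,e,\ACC)$ cancels exactly against the same factor inside $q(z,u,e|\ACC)=q(z,u,e)q(\ACC|z,u,e)/q(\ACC)$, yielding the bound $\bigl(2(\epsilon_{Az1}+\epsilon_{Az2})+\gamma^{\mu_4 n}\bigr)/q(\ACC)$; Markov's inequality then produces the square root, with $\delta_1$ \emph{defined} to be $2(\epsilon_{Az1}+\epsilon_{Az2})+\gamma^{\mu_4 n}$ (despite the word ``constant'' in the statement, this exponentially small choice is what Theorem~\ref{thm:main_protocol_I} needs). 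Your route instead proves a pointwise bound on an explicit good set and handles the small-denominator issue by hand via $\tau$, tuning $\tau=\sqrt{\delta_1 q(\ACC)}-\epsilon_{Az1}-\epsilon_{Az2}$ to reproduce the same square-root form; this is a valid and somewhat more transparent description of \emph{which} triples $(z,u,e)$ are good, at the cost of the extra threshold bookkeeping that the paper's cancellation-plus-Markov argument makes unnecessary. To recover the paper's quantitative statement rather than only the constant-$\delta_1$ reading, just note that your case (b) goes through whenever $\delta_1\geq 2(\epsilon_{Az1}+\epsilon_{Az2})+(\gamma'')^{\mu n}$, which is exactly the paper's choice.
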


\begin{proof}

Let us write
\ben 
\label{eq:ustar-split}
&& \sum_{z,u,e} q(z,u,e|\ACC) \max_{x} q(x|z,u,e,\ACC) \nonumber \\
&= & \sum_{(z,u,e) \notin Az^{\delta_{Az}}_{1}} q(z,u,e|\ACC) \max_{x} q(x|z,u,e,\ACC)  \nonumber \\
&& + \sum_{(z,u,e) \in Az^{\delta_{Az}}_{1}} q(z,u,e|\ACC) \max_{x} q(x|z,u,e,\ACC). \nonumber \\
\een
and bound the two terms separately. 
The first term can be simply bounded as 
\ben 
\label{eq:ustar-split-1}
&& \sum_{(z,u,e) \notin Az^{\delta_{Az}}_{1}} q(z,u,e|\ACC) \max_{x} q(x|z,u,e,\ACC) \nonumber \\
&& \stackrel{\max_{x} q(x|z,u,e,\ACC) \leq 1}{\leq}  \sum_{(z,u,e) \notin Az^{\delta_{Az}}_{1}} q(z,u,e|\ACC) \nonumber \\
&& \stackrel{q(z,u,e,\ACC) \leq q(z,u,e)}{\leq} \sum_{(z,u,e) \notin Az^{\delta_{Az}}_{1}}  \frac{q(z,u,e)}{q(\ACC)} \nonumber \\
&& \stackrel{Eq.(\ref{eq:Azuma-measure})}{\leq} \sum_{u} \frac{\epsilon_{Az1}}{q(\ACC)}.
\een

For the second term, with $(z,u,e) \in Az^{\delta_{Az}}_{1}$, we have that for fixed $u$, $(z,e) \in Az^{\delta_{Az}}_{1}(u)$. We therefore split the second term as
\begin{widetext}
\ben 
\label{eq:ustar-split-2}
&& \sum_{(z,u,e) \in Az^{\delta_{Az}}_{1}} q(z,u,e|\ACC) \max_{x} q(x|z,u,e,\ACC) \nonumber \\
&=& \sum_{\substack{u \\ (z,e) \in Az^{\delta_{Az}}_{1}(u) \cap Az^{\mu_1}_{2}(u)}} q(z,u,e|\ACC) \max_{x} q(x|z,u,e,\ACC) 
+\sum_{\substack{u \\ (z,e) \in Az^{\delta_{Az}}_{1}(u) \cap \left(Az^{\mu_1}_{2}(u) \right)^c}} q(z,u,e|\ACC) \max_{x} q(x|z,u,e,\ACC), \nonumber \\
\een
\end{widetext}
where $\left(Az^{\mu_1}_{2}(u) \right)^c$ denotes the complement of the set $Az^{\mu_1}_{2}(u)$. Let us first consider the case when $(z,e) \in  Az^{\delta_{Az}}_{1}(u) \cap Az^{\mu_1}_{2}(u)$, i.e., $(z,e) \in Az(u)$. We define the sets
\ben 
X^{(z,u,e)}_{g1} &=& \{x :  \bar{\Lazuma}_{n}(x, u, z, e) \leq \Lazuma_{n}(x,u) + \delta_{Az} \}, \nonumber  \\
X^{(z,u,e)}_{g2} &=& \{x : \overline{\textit{S}}_n(x, u, z, e) \geq \textit{S}_n(x,u) - \frac{\mu_1}{2} \}, \nonumber \\
\een
and the complements $\left(X^{(z,u,e)}_{g1}\right)^{c}, \left(X^{(z,u,e)}_{g2}\right)^{c}$. 

By the definition of $Az^{\delta_{Az}}_{1}(u)$, for $(z,e) \in Az^{\delta_{Az}}_{1}(u)$ and $x \in \left(X^{(z,u,e)}_{g1}\right)^{c}$, we have 
\be 
q(x|z,u,e) \leq \epsilon_{Az1}
\ee
for $\epsilon_{Az1} = 2 e^{- n \frac{1}{4} \delta_{Az}^2}$.
Similarly, by the definition of $Az^{\mu_1}_{2}(u)$, for $(z,e) \in Az^{\mu_1}_{2}(u)$ and $x \in \left(X^{(z,u,e)}_{g2}\right)^{c}$, we have 
\be 
q(x|z,u,e) \leq \epsilon_{Az2}
\ee 
for $\epsilon_{Az2} = 2  e^{ - n \frac{\mu_1^2}{16} }$.
Therefore, for $(z,e) \in Az(u)$ and $x \in \left(X^{(z,u,e)}_{g1} \cap  X^{(z,u,e)}_{g2} \right)^{c} \cap \ACC_u$, we have that
\be 
\label{eq:x-bound-acc1}
q(x|z,u,e) \leq \epsilon_{Az1} + \epsilon_{Az2}.
\ee

Now let us look at the case when $(z,e) \in  Az(u)$ and $x \in \left(X^{(z,u,e)}_{g1} \cap  X^{(z,u,e)}_{g2} \right) \cap \ACC_u$. By the definition of $\ACC_1$, we have $\Lazuma_n(x,u) \leq \delta$, and by the definition of $X^{(z,u,e)}_{g1}$ we have that 
\be 
\bar{\Lazuma}_n(x,u,z,e) \leq \delta + \delta_{Az}.
\ee
By Lemma \ref{lem:linear_fraction}, for at least $\mu_2 n$ positions $i$ where $\mu_2 = 1 -\sqrt{\delta + \delta_{Az}}$, there is
\be 
\mathbb{E}_{q(\bx_i, \bu_i | \bx_{< i}, \bu_{< i}, z, e)} B(\bx_i, \bu_i) \leq \sqrt{\delta + \delta_{Az}} = \sqrt{2 \delta},
\ee
where we have simply set $\delta_{Az} = \delta$ for constant $\delta > 0$.
Therefore, by Lemma \ref{lem:B_SV_rand}, at these $\mu_2 n$ positions $i$, we have that for the particular input and output pair $\bu_i = \bu^*$ and $\bx_i = \bx^*$
\be 
q_{x_{<i}, u_{< i}, z, e}( \textbf{x}_i  = \textbf{x}^*| \textbf{u}_i = \textbf{u}^*) \leq   \frac14 \left(3+ \frac{2 \sqrt{2\delta}}{(\frac12 -\epsilon)^{8} }  \right).
\ee 
Note that we will choose $\delta$ such that
\ben 
\label{eq:delta-bound1}
&&\frac14 \left(3+ \frac{2 \sqrt{2\delta}}{(\frac12 -\epsilon)^{8} }  \right) < 1 \nonumber \\
&\text{i.e.,}\; \; & 0 < \delta < \frac{(\frac{1}{2} - \varepsilon)^{16}}{8}
\een
to have the above probability bounded below unity. 
Similarly, by the definition of $\ACC_2$, $\textit{S}_n(x,u) \geq \mu_1$, and by the definition of  $X^{(z,u,e)}_{g2}$, we have that
\be 
\overline{\textit{S}}_n(x,u,z,e) \geq \frac{\mu_1}{2}.
\ee
By Lemma \ref{lem:Azuma-count}, for at least $\mu_3 n$ positions $i$, where $\mu_3 = \frac{\mu_1 - 2\kappa}{2(1 - \kappa)}$ for fixed $\kappa > 0$, we have
\be 
q_{\textbf{x}_{<i}, \textbf{u}_{<i}, z, e}( \textbf{x}_i = \textbf{x}^*| \textbf{u}_i = \textbf{u}^*) \geq \kappa.
\ee
Therefore, for $(z,e) \in Az(u)$ and $x \in \left(X^{(z,u,e)}_{g1} \cap  X^{(z,u,e)}_{g2} \right) \cap \ACC_u$, we have that there are at least $\mu_4 n$ positions $i$ with $\mu_4 = (\mu_3 + \mu_2 - 1)$ for which 
\be 
\label{eq:x-bound-acc2}
q_{\textbf{x}_{<i}, \textbf{u}_{<i}, z, e}(\textbf{x}_i | \textbf{u}_i = \textbf{u}^*) \leq \gamma
\ee
for $\textbf{x}_i = \textbf{x}^*$ as well as $\textbf{x}_i \neq \textbf{x}^*$. Here, 
\be 
\gamma =  \max{ \left\lbrace \left(1 - \kappa \right), \frac14 \left(3+ \frac{2 \sqrt{2\delta}}{(\frac12 -\epsilon)^{8} }  \right) \right\rbrace }.
\ee
In order to have $\mu_4 > 0$, i.e., $\mu_3 + \mu_2 > 1$ we will choose constant $\delta > 0$ such that 
\ben
\label{eq:delta-bound2}
&&\frac{\mu_1 - 2 \kappa}{2(1-\kappa)} - \sqrt{2 \delta} > 0, \nonumber \\
&\text{i.e.,}& \delta < \frac{1}{2} \left[ \frac{\mu_1 - 2\kappa}{2(1-\kappa)} \right]^2.
\een
Combining Eq. (\ref{eq:delta-bound1}) and Eq.(\ref{eq:delta-bound2}) we have that 
\be 
\delta < \min \left\lbrace \frac{(\frac{1}{2} - \varepsilon)^{16}}{8}, \frac{1}{2} \left[ \frac{\mu_1 - 2\kappa}{2(1-\kappa)} \right]^2 \right\rbrace
\ee
Therefore, for any $(z,e) \in Az(u)$ and $x \in \ACC_u$, combining Eq. (\ref{eq:x-bound-acc1}) and Eq.(\ref{eq:x-bound-acc2}) we have from Lemma \ref{lem:min-entropy-0} that
\ben 
\label{eq:x-bound-acc}
\max_{x} q(x|z,u,e,\ACC) &=& \frac{\max_{x \in \ACC_u} q(x|z,u,e)}{q(\ACC|z,u,e)} \nonumber \\
&\leq & \frac{\max\{ \epsilon_{Az1} + \epsilon_{Az2}, \gamma^{ \mu_4 n} \}}{q(\ACC|z,u,e)}. \nonumber \\
\een
From the above considerations, we can bound 
\ben
\label{eq:ustar-split-2-1}
&&\sum_{\substack{u \\ (z,e) \in  Az(u)}} q(z,u,e|\ACC) \max_{x} q(x|z,u,e,\ACC) \nonumber \\
&& \; \; \; \stackrel{\text{Eq. (\ref{eq:x-bound-acc})}}{\leq} \sum_{\substack{u \\ (z,e) \in Az(u)}} q(z,u,e|\ACC)  \frac{\max\{ \epsilon_{Az1} + \epsilon_{Az2}, \gamma^{\mu_4 n} \}}{q(\ACC|z,u,e)} \nonumber \\ 
&& \; \; \; \leq \max\{ \epsilon_{Az1} + \epsilon_{Az2}, \gamma^{\mu_4 n} \} \sum_{(z,u,e)} \frac{q(z,u,e)}{q(\ACC)} \nonumber \\
&& \; \; \;  \leq \frac{\max\{ \epsilon_{Az1} + \epsilon_{Az2}, \gamma^{\mu_4 n} \}}{q(\ACC)}.
\een
We can also simply bound
\begin{widetext}
\ben 
\label{eq:ustar-split-2-2}
&&\sum_{\substack{u \\ (z,e) \in Az^{\delta_{Az}}_{1}(u) \cap \left(Az^{\mu_1}_{2}(u) \right)^c}}  q(z,u,e|\ACC) \max_{x} q(x|z,u,e,\ACC) \nonumber \\
&\leq & \sum_{\substack{u \\ (z,e) \in Az^{\delta_{Az}}_{1}(u) \cap \left(Az^{\mu_1}_{2}(u) \right)^c}} q(z,u,e|\ACC) \stackrel{q(z,u,e,\ACC) \leq q(z,u,e)}{\leq}  \sum_{\substack{u \\ (z,e) \in \left(Az^{\mu_1}_{2}(u) \right)^c}} \frac{q(u)q(z,e|u)}{q(\ACC)} 
\stackrel{Eq.(\ref{eq:Azuma-measure})}{\leq}  \frac{\epsilon_{Az2}}{q(\ACC)}.
\een
\end{widetext}
Inserting the bounds from Eqs. (\ref{eq:ustar-split-2-1}) and (\ref{eq:ustar-split-2-2}) into Eq.(\ref{eq:ustar-split-2}) gives
\ben 
\label{eq:ustar-split-2f}
&& \sum_{(z,u,e) \in Az^{\delta_{Az}}_{1}} q(z,u,e|\ACC) \max_{x} q(x|z,u,e,\ACC) \nonumber \\
&&  \leq \frac{\epsilon_{Az1} + 2 \epsilon_{Az2} +\gamma^{\mu_4 n} }{q(\ACC)}
\een
Finally, inserting the bounds from Eqs.(\ref{eq:ustar-split-1}) and (\ref{eq:ustar-split-2f}) into Eq. (\ref{eq:ustar-split}) gives
\ben 
&& \sum_{(z,u,e)} q(z,u,e|\ACC) \max_{x} q(x|z,u,e,\ACC) \nonumber \\
&& \leq \frac{2(\epsilon_{Az1} + \epsilon_{Az2}) +\gamma^{\mu_4 n} }{q(\ACC)}
\een

Applying Markov inequality, setting $\delta_1 = 2(\epsilon_{Az1} + \epsilon_{Az2})  + \gamma^{\mu_4 n}$, we get that
\ben 
&&\Pr_{\sim q(z,u,e|\ACC)} \left( \max_{x} q(x|z,u,e,\ACC) \leq \sqrt{\frac{\delta_1}{q(\ACC)}} \right) \nonumber \\
&& \; \; \; \; \;  \geq 1 - \sqrt{\frac{\delta_1}{q(\ACC)}}.
\een
This completes the proof.
\end{proof}
We now note the following lemma which follows from the assumptions stated in the text (for a proof see \cite{our2})
\begin{lemma}
\label{lem:Markov_acc}
For any probability distribution $q(x,z,u,t,e)$ satisfying Eq.(\ref{q-cond}) it holds that
\be 
q(x|z,u,t,e,ACC) = q(x|z,u,ACC).
\ee
\end{lemma}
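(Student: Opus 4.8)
For any probability distribution $q(x,z,u,t,e)$ satisfying Eq.~(\ref{q-cond}), one has $q(x|z,u,t,e,ACC) = q(x|z,u,ACC)$.

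The plan is to show that conditioning on $ACC$ does not reintroduce a dependence of $x$ on $t$ that was absent before conditioning, by exploiting the structure of the event $ACC$. The key observation is that $ACC = ACC_1 \cap ACC_2$ is an event defined purely in terms of $(x,u)$: indeed $ACC_1 = \{(x,u): \Lazuma_n(x,u) \leq \delta\}$ and $ACC_2 = \{(x,u): \textit{S}_n(x,u) \geq \mu_1\}$, so the indicator $\mathbb{I}_{ACC}$ is a function of $(x,u)$ alone and carries no further dependence on $t$ or $e$. Hence, writing $q(x|z,u,t,e,ACC) \propto q(x,ACC|z,u,t,e) = \mathbb{I}_{(x,u)\in ACC}\, q(x|z,u,t,e)$, it suffices to establish that $q(x|z,u,t,e) = q(x|z,u,e)$ — and then that the resulting normalized conditional distribution over the set $ACC_u$ does not depend on $t$ either.

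Concretely, I would proceed as follows. First, invoke the first line of Eq.~(\ref{q-cond}), $q(x,z|u,t,e) = q(x,z|u)$, which in particular gives $q(x|z,u,t,e) = q(x|z,u)$ (dividing by $q(z|u,t,e)=q(z|u)$, which also follows from that same line by summing over $x$). So $x$ is already independent of $t$ (and $e$) given $(z,u)$ before conditioning on $ACC$. Second, compute
\ban
q(x|z,u,t,e,ACC) &= \frac{q(x, ACC | z,u,t,e)}{q(ACC|z,u,t,e)} = \frac{\mathbb{I}_{(x,u)\in ACC}\; q(x|z,u,t,e)}{\sum_{x'} \mathbb{I}_{(x',u)\in ACC}\; q(x'|z,u,t,e)}.
\ean
Using $q(x|z,u,t,e) = q(x|z,u)$ in both numerator and denominator, everything depending on $t$ or $e$ cancels, leaving $\frac{\mathbb{I}_{(x,u)\in ACC}\, q(x|z,u)}{\sum_{x'}\mathbb{I}_{(x',u)\in ACC}\, q(x'|z,u)}$, which is manifestly independent of $t$. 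Running the identical computation with $t$ omitted gives $q(x|z,u,e,ACC)$ equal to the same expression, and since it is also independent of $e$, it equals $q(x|z,u,ACC)$. Chaining the equalities yields $q(x|z,u,t,e,ACC) = q(x|z,u,ACC)$.

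The only subtlety — and the place where one must be slightly careful rather than the genuine "obstacle" — is the degenerate case $q(ACC|z,u,t,e) = 0$, where the conditional probability is undefined; one handles this by noting $q(ACC|z,u,t,e) = \sum_{x'}\mathbb{I}_{(x',u)\in ACC}\, q(x'|z,u) = q(ACC|z,u)$ is independent of $(t,e)$, so the conditioning is well-defined for one choice iff it is for all, and on the relevant support the argument above applies verbatim. Everything else is bookkeeping with the marginalization identities in Eq.~(\ref{q-cond}); there is no hard analytic step here, since the entire content is that $ACC$ is a $(x,u)$-measurable event and the pre-conditioning independence $q(x,z|u,t,e)=q(x,z|u)$ is already supplied. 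This is exactly the computation carried out in the proof of the analogous lemma in \cite{our2}.
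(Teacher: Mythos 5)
Your proof is correct and is essentially the only natural argument: the key facts are that $ACC$ is a $(x,u)$-measurable event and that the first line of Eq.~(\ref{q-cond}) (together with summing over $x$ to get $q(z|u,t,e)=q(z|u)$) already yields $q(x|z,u,t,e)=q(x|z,u)$, so conditioning on $ACC$ simply multiplies by $\mathbb{I}_{(x,u)\in ACC}$ and renormalizes, leaving an expression with no $(t,e)$-dependence. The paper itself defers the proof to \cite{our2}, and your computation — including the remark that $q(ACC|z,u,t,e)=q(ACC|z,u)$, so the degenerate case where the conditional is undefined is handled uniformly — is the same bookkeeping argument that reference carries out.
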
 
We use Lemma \ref{lem:Markov_acc} along with Lemma \ref{lem:min-ent} to obtain the following theorem whose proof follows a similar statement in \cite{our2} showing that either the tests in the protocol are passed with vanishing probability or we obtain $\Omega(n^{1/4})$ ($|S| = 2^{\Omega(n^{1/4})}$) secure random bits.  
\begin{thm}
\label{thm:main_protocol_I}
Let $n$ denote the number of runs in Protocol I and suppose we are given $\epsilon>0$. For fixed $\mu_1 > 0$, $0 < \kappa < \frac{\mu_1}{2}$, set $\delta > 0$ such that  
\be
\delta < \min \left\lbrace \frac{(\frac{1}{2} - \varepsilon)^{16}}{8}, \frac{1}{2} \left[ \frac{\mu_1 - 2\kappa}{2(1-\kappa)} \right]^2. \right\rbrace
\ee
Then for any probability distribution $p_w(x,z,u,t,e)$  
satisfying Eqs.  \eqref{eq:p-cond1}-\eqref{eq:p-cond6}
there exists a {non-explicit} extractor $s(x,t)$ with $|S| = 2^{\Omega(n^{1/4})}$ values, such that 
\be
\dcintro \cdot p(\ACCd) \leq 2^{-\Omega(n^{1/4})}, 
\ee
where $\dcintro$ is given by \eqref{dist-comp-sup} as 
\be
d_c :=\sum_{s,e} \max_{w } \sum_{z} \left |p(s,z,e|w, \ACC ) - \frac{1}{|S|}p(z,e|w, \ACC )\right|.
\ee
{Alternatively, one can use an explicit extractor $s'(x,t)$ producing a single bit of randomness with 
\be
\dcintro \cdot p(\ACCd) \leq 2^{-\Omega(n^{1/(2C)})}, 
\ee
for some constant $C$.} 
\end{thm}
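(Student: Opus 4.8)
The plan is to assemble Lemmas~\ref{lem:B_SV_rand}, \ref{lemmaazuma}, \ref{lem:Azuma-count}, \ref{lem:min-entropy-0}, \ref{lem:min-ent} and \ref{lem:Markov_acc} into a two--independent--source extraction argument. First I reduce the composable distance. As established in the Assumptions section, the worst--case adversarial input $w$ is absorbed into the distribution $q(x,z,u,t,e)=p(e)\,p_{w_e}(x,z,u,t|e)$ satisfying Eq.~\eqref{q-cond}, one has $d_c\le|S|\,d$ with $d=\sum_{z,u,e}q(z,u,e|\ACC)\sum_s\bigl|q(s|z,u,e,\ACC)-1/|S|\bigr|$, and $p(\ACC)=q(\ACC)$ because both tests depend only on $(x,u)$, whose law is independent of $w$ by Assumption~A1 and no--signaling. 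Hence it suffices to show $d\cdot q(\ACC)\le 2^{-\Omega(n^{1/4})}/|S|$. Note also $d\le 2$ always, so for any fixed output length $m$ with $|S|=2^m$ the claim is immediate whenever $q(\ACC)\le 2^{-2m}$, since then $d_c\cdot p(\ACC)\le 2^{m+1}q(\ACC)\le 2^{1-m}$.

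It remains to treat $q(\ACC)>2^{-2m}$, and here the point is to exhibit, conditioned on $(z,u,e,\ACC)$, two independent sources of linear min--entropy. By Lemma~\ref{lem:Markov_acc}, $q(x|z,u,t,e,\ACC)=q(x|z,u,\ACC)$; combined with the identity $q(x,z|u,t,e)=q(x,z|u)$ of Eq.~\eqref{q-cond} (which gives $x\perp(t,e)\mid(z,u)$), conditioning on $\ACC$---a function of $(x,u)$---neither couples $x$ to $t$ nor alters the law of $t$, so $x$ and $t$ are independent given $(z,u,e,\ACC)$ and $q(t|z,u,e,\ACC)$ still obeys the SV conditions, hence has min--entropy $\ge n\log_2\frac{2}{1+2\varepsilon}=\Omega(n)$. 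For $x$ I invoke Lemma~\ref{lem:min-ent}: with probability $\ge 1-\sqrt{\delta_1/q(\ACC)}$ over $(z,u,e)\sim q(\cdot|\ACC)$, $\max_x q(x|z,u,e,\ACC)\le\sqrt{\delta_1/q(\ACC)}$, where $\delta_1=2(\epsilon_{Az1}+\epsilon_{Az2})+\gamma^{\mu_4 n}$; by the explicit Azuma--Hoeffding forms of $\epsilon_{Az1},\epsilon_{Az2}$, by $\mu_4>0$, and by $\gamma<1$ (ensured by the hypotheses on $\delta$ via Eqs.~\eqref{eq:delta-bound1}--\eqref{eq:delta-bound2}), we have $\delta_1=2^{-\Omega(n)}$, and since $q(\ACC)>2^{-2m}$ with $m=o(n)$ also $\sqrt{\delta_1/q(\ACC)}=2^{-\Omega(n)}$. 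Thus on this ``good'' event $x|z,u,e,\ACC$ is a min--entropy source of min--entropy $\ge\frac12\log_2(q(\ACC)/\delta_1)=\Omega(n)$. Behind these lemmas sits the chain: the small observed Bell value bounds the arithmetic mean of conditional Bell values (Lemma~\ref{lemmaazuma}), hence a linear fraction of conditional boxes are near--optimal (Lemma~\ref{lem:linear_fraction}) and obey the linear--programming bound of Lemma~\ref{lem:B_SV_rand}; the second test with Lemma~\ref{lem:Azuma-count} forces a linear fraction of runs with input $\textbf{u}^*$ and conditional probability $\ge\kappa$ for $\textbf{x}^*$; and Lemma~\ref{lem:min-entropy-0} turns the overlap of these two linear fractions into the stated min--entropy.

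Now fix $m=\lfloor c\,n^{1/4}\rfloor$. On the good $(z,u,e)$ the two independent sources $x$ and $t$ both have min--entropy $\Omega(n)\gg m$, so applying the non--explicit two--source extractor of \cite{CG} and truncating to the first $m$ bits gives an output $2^{-\Omega(n)}$--close to $U_m$ against the classical side information $(z,u,e)$; the bad $(z,u,e)$ contribute at most $2$ each, with total $q(\cdot|\ACC)$--mass $\le\sqrt{\delta_1/q(\ACC)}$. Hence $d\le 2^{-\Omega(n)}+2\sqrt{\delta_1/q(\ACC)}$, so $d\cdot q(\ACC)\le 2^{-\Omega(n)}+2\sqrt{\delta_1\, q(\ACC)}\le 2^{-\Omega(n)}$, and $d_c\cdot p(\ACC)\le|S|\,d\,q(\ACC)\le 2^m\cdot2^{-\Omega(n)}=2^{-\Omega(n)}$. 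Combining with the trivial regime $q(\ACC)\le 2^{-2m}$ yields $d_c\cdot p(\ACC)\le 2^{-\Omega(n^{1/4})}$ with $|S|=2^{\Omega(n^{1/4})}$. The alternative explicit--extractor statement follows by the same scheme, replacing \cite{CG} by the one--bit extractor of \cite{one_bit_extr} (min--entropy $\log^{C}(h)$, output one bit, error $h^{-\Omega(1)}$): choosing the acceptance threshold and the parameter $h$ so as to balance the polynomial extractor error against the $\sqrt{\delta_1/q(\ACC)}$ term gives $d_c\cdot p(\ACC)\le 2^{-\Omega(n^{1/(2C)})}$.

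The main obstacle is the bookkeeping forced by not knowing $q(\ACC)$ in advance, together with the $|S|$--factor loss incurred in passing from the classical--side--information distance $d$ to the no--signaling distance $d_c$ (Proposition~16 of \cite{our2}): the output length must be fixed beforehand, so it is taken small enough ($\Theta(n^{1/4})$) that neither $2^m$ nor the extractor error can overcome the $2^{-\Omega(n)}$ slack supplied by the Azuma--Hoeffding bound (Lemma~\ref{lemmaazuma}), the (generalized Chernoff) concentration of the SV source, and Lemma~\ref{lem:min-ent}, while still representing a non--vacuous amount of extracted randomness. A secondary but essential point---handled by Lemma~\ref{lem:Markov_acc} and the structural identities of Eq.~\eqref{q-cond}---is verifying that conditioning on the acceptance event $\ACC$, which correlates $x$ and $u$, preserves both the SV structure of $t$ and the independence of $x$ from $t$.
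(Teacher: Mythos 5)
Your proof is correct and follows essentially the same route as the paper, which itself only sketches the argument and defers to a ``similar statement in \cite{our2}''. You correctly reproduce the key case split (``either $q(\ACC)$ is tiny, or it is large enough that Lemma~\ref{lem:min-ent} produces a genuine linear min-entropy source''), correctly invoke Lemma~\ref{lem:Markov_acc} together with Eq.~\eqref{q-cond} to establish that $x$ and $t$ remain independent and that $t$ keeps its SV structure after conditioning on $(z,u,e,\ACC)$, and correctly balance the fixed output length $m=\Theta(n^{1/4})$ against the $|S|$-factor loss of the $d_c\le|S|\,d$ reduction. One small inaccuracy in your closing paragraph: the generalized Chernoff bound (Theorem~\ref{thm:chernoff} / Lemma~\ref{SV-chernoff}) plays no role in the soundness bound you are proving here---it is used only in the completeness section to show that honest quantum boxes pass both tests with high probability---so it should not be listed among the sources of the $2^{-\Omega(n)}$ slack in $\delta_1$, which comes solely from Azuma--Hoeffding (via $\epsilon_{Az1},\epsilon_{Az2}$) and the geometric decay $\gamma^{\mu_4 n}$ arising from Lemmas~\ref{lem:B_SV_rand}--\ref{lem:min-entropy-0}.
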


\section{Passing the tests with quantum boxes}
Finally, we check that for suitable parameters $\delta$ and $\mu_1$ both tests in the protocol are passed with the use of good quantum boxes by the honest parties. 
\subsection{Generalized Chernoff bound for Santha-Vazirani sources}
The final part of the proof is to show that if the honest parties use good quantum boxes, the tests in the protocol are passed with high probability. We first show that the Santha-Vazirani source satisfies an exponential concentration property given by the following generalized Chernoff bound, which will imply that the second test in the protocol is feasible, i.e., that in a linear fraction of the runs the setting $\textbf{u}^*$ appears. 
\begin{thm}(\textbf{Generalized Chernoff bound})\cite{Panconesi-Srinivasan,Impagliazzo-Kabentes}
\label{thm:chernoff}
Let $X_i$ for $i\in[n]$ be Boolean random variables such that for some $0\leq \zeta \leq 1$, we have that, for every subset $S \subseteq [n]$
$\Pr\left[ \wedge_{i\in S}X_i =1 \right] \leq \zeta^{|S|}$. Then, for any $0\leq \zeta \leq \gamma \leq 1$ 
\be
\Pr \left[ \sum_{i=1}^n X_i \geq \gamma n \right] \leq  e^{-nD(\gamma||\zeta)},
\ee
where $D( \cdot || \cdot )$ is the relative entropy function. In particular $D(\gamma||\zeta) \geq 2(\gamma - \zeta)^2$.
\end{thm}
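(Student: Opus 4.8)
\medskip
\noindent\emph{Sketch of the argument.} The plan is to run a Chernoff-type moment-generating-function bound, using the subset hypothesis precisely to dominate the relevant expectation by that of a genuine sum of independent $\mathrm{Bernoulli}(\zeta)$ variables, and then to carry out the textbook optimization for the binomial distribution. First I would fix $t\geq 0$ and apply Markov's inequality to $e^{t\sum_i X_i}$, giving $\Pr[\sum_i X_i\geq \gamma n]\leq e^{-t\gamma n}\,\mathbb{E}[e^{t\sum_i X_i}]$. Since each $X_i\in\{0,1\}$ we have $e^{tX_i}=1+(e^t-1)X_i$, so expanding the product $\prod_i\bigl(1+(e^t-1)X_i\bigr)$ over subsets gives $\mathbb{E}[e^{t\sum_i X_i}]=\sum_{S\subseteq[n]}(e^t-1)^{|S|}\,\mathbb{E}\bigl[\prod_{i\in S}X_i\bigr]=\sum_{S}(e^t-1)^{|S|}\Pr[\wedge_{i\in S}X_i=1]$. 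The key point is that for $t\geq 0$ every coefficient $(e^t-1)^{|S|}$ is nonnegative, so the hypothesis $\Pr[\wedge_{i\in S}X_i=1]\leq\zeta^{|S|}$ can be applied term by term, yielding $\mathbb{E}[e^{t\sum_i X_i}]\leq\sum_{k=0}^n\binom{n}{k}\bigl((e^t-1)\zeta\bigr)^k=\bigl(1+(e^t-1)\zeta\bigr)^n$, which is exactly the moment generating function of $\mathrm{Bin}(n,\zeta)$.

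Next I would optimize the resulting bound $\Pr[\sum_i X_i\geq\gamma n]\leq\bigl(e^{-t\gamma}(1+(e^t-1)\zeta)\bigr)^n$ over $t\geq 0$. Differentiating the bracket in $t$ gives the optimal choice $e^t=\frac{\gamma(1-\zeta)}{\zeta(1-\gamma)}$, which satisfies $t\geq 0$ precisely because $\gamma\geq\zeta$ (this is where the hypothesis $\zeta\leq\gamma$ enters). Substituting back, $1+(e^t-1)\zeta=\frac{1-\zeta}{1-\gamma}$ and $e^{-t\gamma}=\bigl(\frac{\zeta(1-\gamma)}{\gamma(1-\zeta)}\bigr)^{\gamma}$, so the per-coordinate factor becomes $\bigl(\frac{\zeta}{\gamma}\bigr)^{\gamma}\bigl(\frac{1-\zeta}{1-\gamma}\bigr)^{1-\gamma}=e^{-D(\gamma\|\zeta)}$, where $D(\gamma\|\zeta)=\gamma\ln\frac{\gamma}{\zeta}+(1-\gamma)\ln\frac{1-\gamma}{1-\zeta}$ is the binary relative entropy. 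Raising to the $n$-th power gives the claimed $\Pr[\sum_i X_i\geq\gamma n]\leq e^{-nD(\gamma\|\zeta)}$.

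Finally, for the quantitative consequence $D(\gamma\|\zeta)\geq 2(\gamma-\zeta)^2$ (Pinsker's inequality specialized to binary distributions), I would fix $\gamma$ and set $h(\zeta):=D(\gamma\|\zeta)-2(\gamma-\zeta)^2$. A direct computation gives $h'(\zeta)=(\zeta-\gamma)\bigl(\frac{1}{\zeta(1-\zeta)}-4\bigr)$, and since $\zeta(1-\zeta)\leq\frac14$ the second factor is nonnegative; hence $h$ decreases on $(0,\gamma]$ and increases on $[\gamma,1)$, so $h\geq h(\gamma)=0$, which is the desired inequality.

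The main obstacle is really the only nontrivial idea, namely the first step: recognizing that the moment generating function of $\sum_i X_i$ is a nonnegative combination of the subset probabilities $\Pr[\wedge_{i\in S}X_i=1]$, so that the stated hypothesis is exactly what is needed to dominate it by the binomial moment generating function; once this is seen, the remainder is the standard Chernoff computation. The only points requiring care are that the termwise use of the hypothesis needs $t\geq 0$, and that the optimizing $t$ lands in this range, both of which are guaranteed by $\gamma\geq\zeta$.
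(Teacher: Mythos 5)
Your proof is correct and complete. Note, however, that the paper itself does not prove this theorem --- it is stated with citations to Panconesi--Srinivasan and Impagliazzo--Kabanets and used as a black box, so there is no in-paper argument to compare against. What you have reproduced is essentially the Panconesi--Srinivasan route: bound the moment generating function of $\sum_i X_i$ by expanding $\prod_i\bigl(1+(e^t-1)X_i\bigr)$ over subsets, observe that for $t\geq 0$ all coefficients are nonnegative so the subset hypothesis can be applied term by term to dominate by the $\mathrm{Bin}(n,\zeta)$ moment generating function, and then carry out the standard Chernoff optimization giving $e^{-nD(\gamma\|\zeta)}$, with $t\geq 0$ guaranteed by $\gamma\geq\zeta$. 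Your derivative check for $D(\gamma\|\zeta)\geq 2(\gamma-\zeta)^2$ is also correct (binary Pinsker). For context, the second cited source, Impagliazzo--Kabanets, obtains the same bound by a deliberately different, MGF-free argument: they draw a random subset $S\subseteq[n]$ with each index included independently with some probability $q$, bound $\Pr[\wedge_{i\in S}X_i=1]$ once via the hypothesis and once by conditioning on the bad event $\{\sum_i X_i\geq\gamma n\}$, and compare; that route is more elementary (no exponential moments) and is what makes their proof ``constructive.'' Either path is a valid proof of the statement; yours is the classical one.
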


We show now that the SV source satisfies the assumption of the above theorem, i.e., that probability of not obtaining the input 
$\textbf{u}^*$ in a subset of size $k$ is upper bounded by $\zeta^k$ for $\zeta =  \left[1 - \left({1\over 2}-\epsilon\right)^{2 m} \right]$ with $2m$ being the number of bits the two parties need to choose a single $\textbf{u}$ ($2m = 2 \left \lceil{\log{9}} \right \rceil = 8$ for the Bell inequality we consider). 
%
\begin{lemma}
\label{SV-chernoff} 
For any non-empty subset of $k$ indices $(i_1,...,i_k) \subseteq [n]$, and $n$ consecutive instances of random variable $U$ chosen according to measure $\nu$ using $2 m n$ bits from an $\epsilon$-SV source (where $2m$ is the number of bits required to choose a single instance $\bu$), for any fixed $\textbf{u}^*$ in the range of $U$, we have
\be
Pr_{\sim \nu}(\textbf{u}_{i_1} \neq \textbf{u}^*,\dots, \textbf{u}_{i_k} \neq \textbf{u}^*) \leq \left[1 - \left({1\over 2}-\epsilon\right)^{2 m} \right]^{k}
\ee
\end{lemma}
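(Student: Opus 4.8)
The plan is to exploit the block structure of the Santha--Vazirani bits. Each instance $\textbf{u}_\ell$ is produced from a dedicated block of $2m$ consecutive bits of the source, so the $n$ instances consume disjoint, consecutively ordered blocks. First I would order the indices as $i_1 < i_2 < \dots < i_k$, write $Y_1,\dots,Y_{2mn}$ for the source bits fed to the protocol, and fix a bit string $(b_1,\dots,b_{2m})$ that encodes the particular setting $\textbf{u}^*$ (at least one such string exists; if several do, the estimate below only improves).

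Next I would establish a single-block estimate. Let $\mathcal{F}$ stand for conditioning on all source bits strictly preceding the block that produces $\textbf{u}_{i_k}$. By the chain rule and the defining SV inequality \eqref{SVdef-sup} applied to each of the $2m$ bits of that block (each conditional probability of a prescribed value being at least $\tfrac12-\epsilon$), one obtains
\be
\Pr\big(\textbf{u}_{i_k}=\textbf{u}^* \,\big|\, \mathcal{F}\big)\ \geq\ \Big(\tfrac12-\epsilon\Big)^{2m},
\qquad\text{hence}\qquad
\Pr\big(\textbf{u}_{i_k}\neq\textbf{u}^* \,\big|\, \mathcal{F}\big)\ \leq\ 1-\Big(\tfrac12-\epsilon\Big)^{2m}.
\ee

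Then I would peel the conditions off one index at a time. The event $\{\textbf{u}_{i_1}\neq\textbf{u}^*\}\cap\dots\cap\{\textbf{u}_{i_{k-1}}\neq\textbf{u}^*\}$ is determined by the bits in the blocks producing $\textbf{u}_{i_1},\dots,\textbf{u}_{i_{k-1}}$, all of which lie strictly before the block producing $\textbf{u}_{i_k}$; hence it is fixed by the conditioning $\mathcal{F}$. Conditioning on $\mathcal{F}$ and inserting the single-block estimate gives
\be
\Pr\Big(\bigcap_{j=1}^{k}\{\textbf{u}_{i_j}\neq\textbf{u}^*\}\Big)\ \leq\ \Big(1-\big(\tfrac12-\epsilon\big)^{2m}\Big)\,\Pr\Big(\bigcap_{j=1}^{k-1}\{\textbf{u}_{i_j}\neq\textbf{u}^*\}\Big),
\ee
and iterating this down to $k=1$ — the base case being exactly the single-block estimate with empty conditioning — yields the claimed bound $\big[\,1-(\tfrac12-\epsilon)^{2m}\,\big]^{k}$.

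The argument is short and essentially routine; the only point that needs care is the bookkeeping behind the peeling, namely spelling out that the instances are drawn from disjoint, consecutively ordered blocks so that conditioning on the prefix preceding the last relevant block already determines all earlier instances. It is also worth stressing that the SV hypothesis must be used in its strong form, where one conditions on the entire prefix of source bits (including earlier bits within the same block), which is precisely \eqref{SVdef-sup}; no independence between blocks is assumed or needed.
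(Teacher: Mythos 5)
Your proposal is correct and uses essentially the same approach as the paper: decompose via the chain rule, apply the $\varepsilon$-SV bound bit-by-bit to conclude $\Pr(\textbf{u}_{i_j}\neq\textbf{u}^*\mid\text{prefix})\leq 1-(\tfrac12-\epsilon)^{2m}$, and multiply. The paper writes out the full chain-rule expansion over all $n$ runs and marginalizes the unconstrained ones in one shot, whereas you peel off one constrained index at a time by induction; these are the same argument packaged differently, and your peeling makes the bookkeeping slightly more transparent.
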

\begin{proof}
Let us assume, w.l.o.g. that $i_k\geq i_{k-1}\geq ... \geq i_1$. We have
\begin{widetext}
\begin{eqnarray}
&&\Pr_{\sim \nu}(\textbf{u}_{i_1} \neq \textbf{u}^*,\dots, \textbf{u}_{i_k} \neq \textbf{u}^*) \nonumber \\
&&=\sum_{\{\textbf{u}_{i_j}\}: {i_j} \notin \{i_1,\dots,i_k\}} \Pr_{\sim \nu}(\textbf{u}_1, \dots, \textbf{u}_{i_1} \neq \textbf{u}^*, \dots, \textbf{u}_{i_k} \neq \textbf{u}^*, \dots, \textbf{u}_n) \nonumber \\
&&=\sum_{ \{\textbf{u}_{i_j}\}: {i_j} \notin \{i_1,\dots,i_k\}} \Pr_{\sim \nu}(\textbf{u}_1) \Pr_{\sim \nu}(\textbf{u}_{i_1} \neq \textbf{u}^*|\textbf{u}_1,\dots, \textbf{u}_{i_1-1}) \dots \Pr_{\sim \nu}(\textbf{u}_{i_k} \neq \textbf{u}^*| \textbf{u}_1,\dots, \textbf{u}_{i_k -1}) \dots \Pr_{\sim \nu}(\textbf{u}_n | \textbf{u}_1, \dots, \textbf{u}_{n-1}) \nonumber \\
&& \leq \left[1 - \left({1\over 2}-\epsilon\right)^{2 m} \right]^{k} 
\end{eqnarray}
\end{widetext}
The last inequality is obtained by noting that for terms with $i_j \in \{i_1, \dots, i_k \}$, by the definition of the SV source $P(\textbf{u}_{i_j} \neq \textbf{u}^*| \textbf{u}_1, \dots, \textbf{u}_{i_j - 1}) \leq \left[1 - \left({1\over 2}-\epsilon\right)^{2 m} \right]$ with $2m$ being the number of bits required to obtain any input $\textbf{u}$, and for the terms with $i_j \notin \{i_1, \dots, i_k\}$, the sum over $\textbf{u}_{i_j}$ gives unity by normalization. 
\end{proof}
%
Consider the random variable $X_i$ defined as
\begin{displaymath}
X_i \defeq  \left\{
     \begin{array}{lr}
       1 & : \textbf{u}_i \neq \textbf{u}^* \\
       0 & : \text{otherwise} 
     \end{array}
   \right.
\end{displaymath} 
for $\textbf{u}_i$ chosen using the SV source $\nu(\cdot)$. Theorem \ref{thm:chernoff} together with Lemma \ref{SV-chernoff} gives that 
\be
\Pr \left[ \sum_{i=1}^n X_i \geq \gamma n \right] \leq  e^{-2n(\gamma - \zeta)^2},
\ee
or equivalently
\be 
\label{eq:chernoff-bound-1}
\Pr \left[\sum_{i=1}^n X_i < \gamma n \right] \geq 1 - e^{-2n(\gamma - \zeta)^2},
\ee
for $\zeta = \left[1 - \left({1\over 2}-\epsilon\right)^{2 m} \right]$ and $0\leq \zeta \leq \gamma \leq 1$. For $\tilde{U}(u) := \{ i: \textbf{u}_i = \textbf{u}^* \}$ and $Ch \defeq \{ u: |\tilde{U}(u)| \geq \mu_5 n \}$ for some constant $\mu_5 > 0$, Eq. (\ref{eq:chernoff-bound-1}) gives that 
\be 
\label{eq:chernoff-bound-2}
\sum_{u \in Ch} \nu(u) \geq 1 - e^{-2n(1 - \mu_5 - \zeta)^2}.
\ee
Therefore, we obtain that with probability $1 -  e^{-2n(1 - \mu_5 - \zeta)^2}$, $\textbf{u}_i = \textbf{u}^*$ for a fraction $\mu_5$ of the $n$ runs. We note that with the use of the state and measurements from Eqs.(\ref{eq:state}), (\ref{eq:measurements1}) and (\ref{eq:measurements2}), we obtain a box  $\{P_q(\textbf{x}| \textbf{u})\}$ that achieves maximal violation of the Bell inequality, i.e., $\textbf{B}. \{P_q(\textbf{x} | \textbf{u})\} = 0$ and also has $P_q(\textbf{x} = \textbf{x}^* | \textbf{u} = \textbf{u}^*) = \frac{1}{16}$. Therefore, for suitably chosen $\delta, \mu_1 > 0$ the two tests in the protocol are passed with high probability with the use of good quantum boxes.

\bibliographystyle{apsrev}


\end{document}